\theoremstyle{plain}
\newtheorem{theorem}{Theorem}[section]
\newtheorem{proposition}[theorem]{Proposition}
\newtheorem{corollary}[theorem]{Corollary}
\theoremstyle{definition}
\newtheorem{definition}[theorem]{Definition}
\theoremstyle{remark}
\newtheorem{remark}[theorem]{Remark}
\numberwithin{equation}{section}
\numberwithin{theorem}{section}
\newcommand{\upbar}[1]{\,\overline{\! #1}}
\renewcommand{\epsilon}{\varepsilon}
\renewcommand{\tilde}{\widetilde}
\renewcommand{\hat}{\widehat}
\newcommand{\A}{\mathcal{A}}
\newcommand{\BV}{\mathrm{BV}}
\renewcommand{\div}{\mathop{\rm div}\nolimits}
\newcommand{\M}{\mathcal{M}}
\newcommand{\NN}{\mathbb{N}}
\newcommand{\PP}{\mathbb{P}}
\newcommand{\I}{\mathcal{I}}
\renewcommand{\L}{\mathcal{L}}
\newcommand{\RR}{\mathbb{R}}
\newcommand{\X}{\mathcal{X}}
\newcommand{\TW}{\mathrm{TW}}
\newcommand{\const}{\mathrm{const}}
\DeclareMathOperator\co{co}
\DeclareMathOperator\Ran{Ran}
\DeclareMathOperator\Prob{Prob}
\definecolor{light}{gray}{.9}
\title[Dynamical phase transitions on finite graphs]{Dynamical phase transitions for flows on finite graphs}
\author{Davide Gabrielli}
\address{\small{Universit{\`a} dell'Aquila\\ Via Vetoio, Loc. Coppito\\ 67010 L'Aquila, Italia.}}
\author{D.R. Michiel Renger}
\address{\small{WIAS Berlin\\Mohrenstrasse 39\\10117 Berlin, Germany}}
\begin{document}

\maketitle
\thispagestyle{empty}

\begin{abstract}
We study the time-averaged flow in a model of particles that randomly hop on a finite directed graph. In the limit as the number of particles and the time window go to infinity but the graph remains finite, the large-deviation rate functional of the average flow is given by a variational formulation involving paths of the density and flow. We give sufficient conditions under which the large deviations of a given time averaged flow is determined by paths that are constant in time. We then consider a class of models on a discrete ring for which it is possible to show that a better strategy is obtained producing a time-dependent path. This phenomenon, called a dynamical phase transition, is known to occur for some particle systems in the hydrodynamic scaling limit, which is thus extended to the setting of a finite graph.
\bigskip

\noindent{\em Keywords}: Large deviations, particle systems, phase transitions.

\noindent{\em AMS 2010 Subject Classification}:
60F10, 05C21, 82C22, 82C26  
\end{abstract}

\section{Introduction}
\label{sec:introduction}

One of the main challenges of statistical mechanics is to understand the thermodynamics of particle systems that are not in detailed balance. A violation of detailed balance means that even in a steady state, there can be non-trivial net flows. It is therefore natural to study flows, currents and their corresponding large deviations, which is the basis of macroscopic fluctuation theory~\cite{BDSGJLL2015MFT}.
Various limits and large deviations of flows and currents have been studied in the literature, e.g. steady-state or pathwise large deviations as the number $N$ of particles and/or lattice sites goes to infinity, e.g. \cite{Derrida2007,BDSGJLL2015MFT}, or large deviations of time-averaged flows as time $T$ goes to infinity, e.g.~\cite{BertiniFaggionatoGabrielli2016Flows,BertiniFaggionatoGabrielli2016RigLDP}. In this work we are concerned with large deviations of time-averaged flows $\tfrac1T\int_0^T\!Q^N(t)\,dt$, where \emph{first} the number $N$ of particles and \emph{then} the time horizon $T$ is sent to infinity, as for example in~\cite{MR2227084,BDSGJLL2005CurFluct,BDSGJLL2007LdpEmpCur,BodineauDerrida2004,BodineauDerrida2005}. This yields a large deviation principle of the type:
\begin{align*}
  \Prob\!\left( {\textstyle \tfrac1T\int_0^T\!Q^N(t)\,dt \approx \overline q} \right) \stackrel{\substack{N\to\infty\\T\to\infty}}{\sim} e^{-NT\Psi(\overline q)},\notag\\
\intertext{with}
  \Psi(\overline q):=\lim_{T\to\infty} \inf_{\substack{(\rho,q):\\ \tfrac1T\int_0^T\!q(t)\,dt=\overline q,\\ \dot\rho(t)+\div q(t)=0}} \mfrac{1}{T}\int_0^T\!\L(\rho(t),q(t))\,dt,
\end{align*}
where the infimum ranges over paths of particle densities $\rho(t)$ and particle flows $q(t)$, and $\L$ is some non-negative cost function.

In many cases, the infimum over paths is attained or approximated by constant paths, i.e. $\rho(t)$ constant and $q(t)$ constant and divergence free. In this case the above expression simplifies to $\Psi(\overline q)=\inf_{\rho}\L(\rho,\overline q)$ ~\cite{BodineauDerrida2004}. Such simplification will fail if a time-dependent flow $q(t)$ is significantly less costly than its corresponding time-averaged flow $\overline q$. In that case we say that a \emph{dynamical phase transition} occurs (the transition being in parameter $\overline q$), see Definition~\ref{def:DFT}.

Dynamical phase transitions have been shown to occur for systems like exclusion processes with an external bulk force~\cite{BodineauDerrida2005} or the KMP model~\cite{MR2227084}, in the hydrodynamic scaling limit. See also \cite{PhysRevLett.107.180601} for numerial simulations and \cite{PhysRevLett.118.115702}, \cite{Garrahan_2009} and references therein for problems of this type for different models. In this case the mesh of the underlying graph goes to zero in the many-particle limit, yielding quadratic large-deviation cost functions, i.e. $\L(\rho,j)=\tfrac12\lVert j - k(\rho)\rVert^2_{L^2(1/\chi(\rho))}$ for some mobility $\chi(\rho)$ and expected flow $k(\rho)$. By contrast, we consider a finite graph $(\X,E)$ that is not rescaled but remains discrete in the limit; in that case the cost function is typically \emph{entropic}, i.e. $\L(\rho,q)=\sum_{(x,y)\in E} s\big(q_{x,y}\mid k_{x,y}(\rho)\big)$, where $s$ is a relative entropy like functional of $q_{x,y}$ with respect to $k_{x,y}(\rho)$. The precise model, assumptions, and large-deviations statements are introduced in Section~\ref{sec:particle system}.

Entropic cost functions are in many aspects more challenging than quadratic ones. For example, they can only induce generalised gradient flows~\cite{MielkePeletierRenger2014}, they can only be decomposed using a generalised notion of orthogonality~\cite{KaiserJackZimmer2018,RengerZimmer2019TR}, and it is not clear whether they relate to some manifold or more general geometry.

In this work we focus on jump processes on finite graphs with corresponding entropic cost functions, and study whether and when dynamical phase transitions occur. 

The paper is organised as follows. In Section~\ref{sec:particle system} we introduce the general setting that we work in; in Section~\ref{sec:convexity} we present general properties of dynamical phase transitions, and in particular sufficient assumptions to rule them out. In Section~\ref{sec:discrete ring} we restrict to zero-range processes on a discrete ring and give sufficient conditions under which a dynamical phase transition does occur.


\section{Particle systems, limits and large deviations}
\label{sec:particle system}

In Section~\ref{subsec:particle system} we introduce the general particle systems that we work with; the limiting behavior and large deviations as $N\to\infty$ are discussed in Section~\ref{subsec:limit and ldp N},  and the limit and large deviations of average flows as $N\to\infty,T\to\infty$ are discussed in Section~\ref{subsec:ldp time-average}.

\subsection{Particle systems}
\label{subsec:particle system}

Let $(\X, E)$ be a finite directed graph, where the directed edges are $E\subseteq \X\times\X$, excluding the diagonal. The particle system consists of $N$ identical particles $X_1(t),\hdots,X_N(t)$ on $\X$.

The initial states $(x_i)_{i=1}^N$ of the particles are chosen deterministically so that the empirical measure converges, as $N\to +\infty$, to some fixed measure:
\begin{equation*}
  \rho^N_x(0):=\frac mN\sum_{i=1}^N\mathds1_{x_i}(x) = \frac mN\#\{\text{particles at site } x \text{ at time } 0\} \to \rho^*_x.
\end{equation*}
Each single particle carries a mass of $m/N$, so that the total mass in the system can be tuned by the parameter $m>0$, and $\rho^N(0),\rho^*\in\M_m(\X)$, the set of non-negative measures on $\X$ of total mass $m$. The precise choice of the initial condition does not play a key role in this paper, since from the next section on we are only interested in the long-time behaviour.

For the dynamics of the system, one particle randomly hops from site $x$ to $y$ with model-specific rate $N k_{x,y}(\rho^N(t))$ where the \emph{empirical measure} is:
\begin{equation}
  \rho^N_x(t):=\frac mN\sum_{i=1}^N\mathds1_{X_i(t)}(x) = \frac mN\#\{\text{particles at site } x \text{ and time } t\}.
\label{eq:discr emp measure}
\end{equation}
The empirical measure $\rho^N(t)$ is a Markov process in $\M_m(\X)$ with generator
\begin{equation}
  \hat L^N f(\rho) = \sum_{(x,y)\in E}\, Nk_{x,y}(\rho) \big\lbrack f(\rho + \tfrac mN(\mathds1_y -\mathds1_{x}) - f(\rho)\big\rbrack.
\label{eq:discr dens generator}
\end{equation}

For this model, we introduce the \emph{integrated flow} on an edge $(x,y)\in E$  as
\begin{equation}
  W^N_{x,y}(t):=\frac mN \#\big\{\text{particles jumped from $x$ to $y$ in time interval } (0,t\rbrack \big\}\,.
\label{eq:discr emp flow}
\end{equation}
This definition also fixes the initial condition $W^N(0)\equiv0$. The integrated flow $W^N(t)\in\M(E)$ is a non-negative measure on edges, related to the empirical measure through the discrete continuity equation:
\begin{align}\label{eq:discr cont eq}
  \rho^N(t) = \rho^N(0) - \div W^N(t)\, &&\text{where}&&
  \div W_x:= \sum_{y:(x,y)\in E} W_{x,y} - \sum_{y:(y,x)\in E}W_{y,x}.
\end{align}

The pair $(\rho^N(t),W^N(t))$ is again a Markov process, now in $\M_m(\X)\times \M(E)$, with generator
\begin{equation*}
  L^Nf(\rho,w) = \sum_{(x,y)\in E}\, Nk_{x,y}(\rho)\left[f\Big(\rho +\tfrac mN\left(\mathds1_{y}-\mathds1_{x}\right), w + \tfrac mN\mathds1_{(x,y)}\Big) - f\Big(\rho,w\Big)\right].
\end{equation*}
The initial condition will be always of the form $W^N(0)=0$ and $\rho^N_x(0)=\frac{n_xm}{N}$ for some integers $n_x$ such that $\sum_{x\in \X}n_x=N$.
Since the intensities depend on the density $\rho$ only, ignoring the integrated flows reduces this process to the process with generator~\eqref{eq:discr dens generator}.



Note that the rates $k_{x,y}(\rho)$ may depend on all coordinates $(\rho_z)_{z\in\X}$, allowing for example for attraction or catalysis. Here we recall the technical assumtions on $k$ used in \cite{PattersonRenger2019} to prove a large deviations principle. The exact assumptions on $k$ do not play a key role in the current paper, but we briefly mention them here for completeness:
\begin{enumerate}
\item each $k_{x,y}(\rho)\to0$ sufficiently fast when $\rho_x\to0$;
\item each $k_{x,y}(\rho)$ is Lipschitz continuous and non-decreasing in $\rho$ with respect to the natural partial ordering on $\M_m(\X)$;
\item there exists a non-decreasing bijection $\phi:\lbrack0,1\rbrack\to\lbrack0,1\rbrack$ so that
$k_{x,y}(\delta \rho)\geq \phi(\delta) k_{x,y}(\rho)$
for all $\rho\in\M_m(\X),\delta\in\lbrack0,1\rbrack$ and $(x,y)\in E$.
\end{enumerate}
The first condition is needed to prevent negative densities. The continuity implies in particular that the process does not blow up and that the random paths are almost surely of bounded variation, so that the time derivatives $(\dot\rho^N(dt),\dot W^N(dt))$ always exist as measures in time. We call $Q^N(dt):=\dot W^N(dt)$. The monotonicity and the lower bound are rather technical assumptions needed in the proof of the large-deviation principle Theorem~\ref{th:discr path ldp} that we discuss in the next section.

\begin{remark}
In the limit $N\to\infty$ and the corresponding large-deviation regime, the paths become more regular and their derivatives exists as $L^1$-densities $(\dot\rho(t),\dot w(t))$. The precise topology on the path space and rigorous proofs of scaling limits and large-deviations principles will not be discussed here. We will use the notation $q(t):=\dot w(t)$.
\label{rem:BV}
\end{remark}

\begin{remark}
We believe that our results can be extended to more general models. For example, the microscopic jump rates can be $Nk_{x,y}(\rho)$ in approximation only, and the continuity equation~\eqref{eq:discr cont eq} could include more general linear operators than the discrete divergence. This would include the classical Kurtz model for chemical reactions~\cite{Kurtz1970}, allowing also for annihilation and creation of mass. We could consider even models with mass randomly flowing  across each edge and not necessarily transported by particles, like for example in the KMP model \cite{MR656869}. The monotonicity of the rates $k_{x,y}$ should also be generalisable as to include repulsion effects, but the best of our knowledge this has not yet been carried out in the literature.

\end{remark}

\subsection{Limit and large deviations as $N\to\infty$}
\label{subsec:limit and ldp N}


The many-particle limit follows by a similar argument as the classic results by Kurtz:
\begin{theorem}[{\cite{Kurtz1970,Kurtz1972,PattersonRenger2019}}]
The random paths $(\rho^N, W^N)\in\BV(0,T;\M_m(\X)\times\M(E)$ converge narrowly as $N\to\infty$ to the solution $(\rho,w)\in W^{1,1}(0,T;\M_m(\X)\times\M(E))$ of the problem
\begin{align}
&\dot w(t) = k(\rho(t)),
&
&\dot\rho(t) + \div\dot w(t)=0, \notag\\
&w(0)=0,
&
&\rho(0)=\rho^*.
\label{eq:ODE}
\end{align}
\end{theorem}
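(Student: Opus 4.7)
The plan is to apply the classical Kurtz martingale method to the joint Markov process $(\rho^N,W^N)$, identifying the ODE limit via Dynkin-type compensation of the jumps. First, by Dynkin's formula applied to the coordinate function $(\rho,w)\mapsto w_{x,y}$, the process
\[
M^N_{x,y}(t):=W^N_{x,y}(t)-\int_0^t L^N w_{x,y}\big(\rho^N(s),W^N(s)\big)\,ds
\]
is a martingale, and a direct computation from the generator shows that $L^N w_{x,y}(\rho,w)$ is a (deterministic) multiple of $k_{x,y}(\rho)$ matching the right-hand side of~\eqref{eq:ODE}. The predictable quadratic variation of $M^N_{x,y}$ is of order $1/N$, since each jump of $W^N_{x,y}$ has size $m/N$ while the total jump rate is of order $N$, and the rates $k_{x,y}$ are bounded on the compact simplex $\M_m(\X)$ by continuity. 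Doob's $L^2$ inequality then gives $\sup_{t\in[0,T]}|M^N_{x,y}(t)|\to 0$ in probability.

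Next, boundedness of the rates on $\M_m(\X)$ yields a uniform Lipschitz-in-time bound on the compensators, hence on $W^N$ up to the vanishing martingale part; the continuity equation~\eqref{eq:discr cont eq} transfers this regularity to $\rho^N$. This produces tightness of $(\rho^N,W^N)$ in $C([0,T];\M_m(\X)\times\M(E))$ with absolutely continuous limit points. Along any subsequential limit $(\rho,w)$, continuity of $k$ in $\rho$ permits passage to the limit inside the compensator, yielding $\dot w(t)=k(\rho(t))$ together with $\dot\rho(t)+\div\dot w(t)=0$, i.e.\ the system~\eqref{eq:ODE}; the initial conditions are preserved trivially. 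By Lipschitz continuity of $k$, Gronwall's lemma gives uniqueness of the solution, so the whole sequence converges, and narrow convergence on $\BV$ follows from uniform convergence on $[0,T]$.

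The main obstacle is the passage to the limit in the nonlinear term $k_{x,y}(\rho^N(s))$ inside the time integral: it requires uniform convergence of $\rho^N$ on $[0,T]$, which must be extracted by combining the $O(1/N)$ martingale estimate, the continuity equation, and the fact that the jump size $m/N$ vanishes so that every subsequential limit is continuous in time. A secondary technical point is ensuring that $\rho^N(s)$ stays in the domain where $k_{x,y}$ is well-defined along the entire trajectory; this is precisely the role of the assumption that $k_{x,y}(\rho)\to 0$ sufficiently fast as $\rho_x\to 0$, which prevents negative densities and makes the computation of the compensator unambiguous.
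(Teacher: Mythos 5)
Your proposal is correct and follows exactly the classical Kurtz martingale/compensator route (Dynkin decomposition, $O(1/N)$ quadratic variation, tightness, Gronwall uniqueness) that the paper itself invokes by citation to \cite{Kurtz1970,Kurtz1972,PattersonRenger2019} without supplying a proof. The only cosmetic point is the constant in the compensator ($L^N w_{x,y}=m\,k_{x,y}(\rho)$, so a factor $m$ appears unless it is absorbed into the normalisation), which you gloss over in the same way the paper does.
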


Fluctuations around this limit are quantified by the following large-deviation principle. Recall that we use the notation $w(t)=:\int_0^t\!q(s)ds$. 
%

\begin{theorem}[{\cite{PattersonRenger2019}}]
The random paths $(\rho^N, W^N)\in\BV(0,T;\M_m(\X)\times\M(E))$ satisfy a large-deviation principle as $N\to\infty$ with speed $N$:
\begin{equation*}
  \Prob\big( (\rho^N, W^N) \approx (\rho,w)\big) \sim e^{-N \I_{\lbrack 0,T\rbrack}(\rho,\dot w\mid\rho^*)}
\end{equation*}
with rate functional
\begin{multline}
\I_{\lbrack0,T\rbrack}(\rho,q \mid \rho^*)=\\
\begin{cases}
	\sum_{(x,y)\in E}\int_0^T\!s\big( q_{x,y}(t) \mid k_{x,y}(\rho(t)) \big)\,dt,
            & \textrm{if } \rho\in W^{1,1}(0,T;\M_m(\X)), q \in L^1(0,T;\M(E)),\\
            & \quad \dot\rho+\div q=0 \text{ and } \rho(0)=\rho^*\,,\\
	+\infty,  & \textrm{otherwise}\,,
\end{cases}
\label{eq:LDP N}
\end{multline}
	where
\begin{align*}
s(a\mid b):=
\begin{cases}
a\log\mfrac{a}{b} -a +b,  &a,b>0,\\
b,                        &a=0,\\
+\infty,                   &\text{otherwise}.
\end{cases}
\end{align*}
\label{th:discr path ldp}
\end{theorem}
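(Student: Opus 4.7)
The plan is to follow the standard route for large-deviation principles of Markov jump processes: exponential tightness, followed by matching upper and lower bounds obtained via exponential martingales and a change of measure, respectively. The pair $(\rho^N, W^N)$ is Markov with the stated generator, and each $W^N_{x,y}$ is, up to the scaling $m/N$, a counting process for jumps across edge $(x,y)$ with state-dependent intensity $Nk_{x,y}(\rho^N(s))$. The Poisson structure underlying these counts is what produces the entropic form of the rate functional, while the continuity equation $\dot\rho + \div q = 0$ will emerge naturally from the variational problem by letting the test function dual to $\rho$ vary freely.

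First I would establish exponential tightness at speed $N$ in a suitable path topology on $\BV$. Mass conservation controls $\rho^N$; Lipschitz continuity of $k$ together with exponential moment estimates for time-changed Poisson processes controls $W^N$. With exponential tightness in hand, it suffices to prove matching local upper and lower bounds.

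For the upper bound, the main tool is the exponential martingale
\begin{equation*}
  M^N_{\xi,\eta}(t) = \exp\Big( N\langle \xi(t), \rho^N(t)\rangle + N\langle \eta(t), W^N(t)\rangle - N\langle \xi(0), \rho^N(0)\rangle - N\int_0^t \mathcal{H}_{\xi,\eta}(\rho^N(s), s)\,ds \Big),
\end{equation*}
associated with smooth test functions $\xi:[0,T]\to\RR^\X$ and $\eta:[0,T]\to\RR^E$, where the Hamiltonian
\begin{equation*}
  \mathcal{H}_{\xi,\eta}(\rho, s) = \langle \dot\xi(s), \rho\rangle + \sum_{(x,y)\in E} k_{x,y}(\rho)\bigl[e^{m(\eta_{x,y}(s) + \xi_y(s) - \xi_x(s))} - 1\bigr]
\end{equation*}
is obtained by applying the generator to exponentials of linear functionals. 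Doob's inequality and a minimax argument then yield a Laplace-type upper bound whose dual is the fibrewise Legendre transform of $\mathcal{H}$: variation in $\eta$ produces exactly the entropic cost $s(q_{x,y} \mid k_{x,y}(\rho))$, while free variation in $\xi$ forces the continuity equation $\dot\rho + \div q = 0$.

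For the lower bound, given a regular reference path $(\rho, q)$ with finite rate function, I would construct a tilted jump process whose time-dependent rates make $(\rho, q)$ the law-of-large-numbers trajectory. The Radon--Nikodym derivative between the tilted and original measures factorises over jumps, and its $(1/N)\log$ converges along typical trajectories to minus the claimed rate function; a density argument then extends the bound from tilts that are smooth and bounded away from zero to the full class of admissible paths. The principal difficulty is technical and concentrated near the boundary of $\M_m(\X)$: when $k_{x,y}(\rho)$ or $q_{x,y}$ approaches zero the logarithmic cost becomes singular, and a careful regularisation of $(\rho, q)$ that preserves the continuity equation, controls the cost, and remains compatible with the lower bound $k_{x,y}(\delta\rho) \geq \phi(\delta) k_{x,y}(\rho)$ on $k$ is the delicate step.
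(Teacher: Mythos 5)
Your outline is sound, but it follows a genuinely different route from the one sketched in the paper. You take the Feng--Kurtz/exponential-martingale road: exponential tightness, an upper bound from the supermartingale built on the Hamiltonian $\mathcal H(\rho,p)=\sum_{(x,y)}k_{x,y}(\rho)(e^{p_{x,y}}-1)$ (your Hamiltonian and the scaling of the dual variables are consistent with the generator $L^N$, and its fibrewise Legendre transform in the flux variable is indeed $s(q_{x,y}\mid k_{x,y}(\rho))$ with the constraint $\dot\rho+\div q=0$ enforced by free variation of $\xi$), and a lower bound by tilting to time-dependent rates that make $(\rho,q)$ typical. The paper instead (Appendix~\ref{app:ldp proof}) argues via a Sanov theorem for the empirical measure $\hat\PP^N$ of particle trajectories on $D([0,T];\X)$, obtaining the rate functional as a relative entropy $H(\hat\PP\mid\PP_{\rho^*})$ and then contracting onto $(\rho,q)$; the minimiser is the Markovian path measure with rates $\hat k_{x,y}(t)=q_{x,y}(t)/\rho_x(t)$, and the explicit relative-entropy formula between Markov path measures yields the entropic integrand. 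The trade-off: the paper's Sanov route is elementary and makes the origin of $s(\cdot\mid\cdot)$ transparent, but it only applies to independent particles ($k_{x,y}(\rho)=\rho_x k_{x,y}$), whereas your approach handles general density-dependent rates and is essentially the strategy of the cited reference. You also correctly identify where the real work lies, namely the lower-bound approximation of paths near $\partial\M_m(\X)$, where the assumptions on $k$ (in particular $k_{x,y}(\delta\rho)\geq\phi(\delta)k_{x,y}(\rho)$ and the decay of $k_{x,y}$ as $\rho_x\to0$) are needed to control the singular logarithmic cost; this is precisely why those technical hypotheses appear in Section~\ref{subsec:particle system}.
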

The reader is referred to \cite{Renger2018a,PattersonRenger2019} for a proof of the above Theorem. Note that the integrand in~\eqref{eq:LDP N} can be interpreted as a kind of Lagrangian with corresponding Hamiltonian $\mathcal H(\rho,p):=\sum_{(x,y)\in E} k_{x,y}(\rho)\big(e^{p_{x,y}}-1\big)$, where $p$ are the variables dual to the variables $q$ \cite{Monthus2019}. See also \cite{MR3851785} for a similar rate functional obtained in the framework of time periodic Markov chains and  Appendix \ref{app:ldp proof} for the outline of a simple derivation in the case of independent particles using a Sanov Theorem for paths.

\begin{remark}
Clearly, the limit densities~\eqref{eq:ODE} satisfy the closed equation:
\begin{equation}
  \dot \rho(t)+\div k(\rho(t))=0\,.
\label{eq:closedrho}
\end{equation} 
In general it is possible to construct models for which the equations \eqref{eq:closedrho} have several stable and unstable equilibrium points or even limit cycles (see for example \cite{lazar}). We are however mostly interested in dynamical phase transitions for models for which \eqref{eq:closedrho} has a unique and globally attractive equilibrium. This corresponds to the behavior of a thermodynamic system relaxing to equilibrium.
\label{rem:single ss}
\end{remark}

\subsection{Limit and large deviations as $N\to\infty$ and $T\to\infty$}
\label{subsec:ldp time-average}

We now study the time-averaged flow (recall the notation introduced just before Remark~\ref{rem:BV}),
\begin{equation*}
  \overline Q^N(T):=\mfrac1T\int_0^T\! Q^N(dt):=\mfrac1T W^N(T),
\end{equation*}
in the limit where $N\to\infty$ while $T=T_N$ for some function of $N$ such that $T_N\to\infty$ sufficiently slow. To simplify the arguments we always take $N\to\infty$ first and subsequently $T\to\infty$. Under sufficient conditions, the average flow $\overline Q^N(T)$ converges to $k(\overline\rho)$ where $\overline\rho$ is the unique attractive steady state of~\eqref{eq:closedrho}.

In order to state the corresponding large-deviation principle we first need to introduce some notation and facts. Let 
\begin{equation*}
  \I_{\lbrack0,T\rbrack}(\rho,q):=\I_{\lbrack0,T\rbrack}(\rho,q\mid\rho(0))
\end{equation*}
be the functional that coincides with \eqref{eq:LDP N} but without the constraint on the initial condition. We then define
\begin{align}
  \Psi_{T,m}(\bar q\mid\rho^*)&:=\frac 1T\inf_{(\rho,q)\in \A_{T,m}(\bar q)} \I_{\lbrack0,T\rbrack}(\rho,q\mid\rho^*)\,,
\label{eq:PsiT}\\
  \Psi_{T,m}(\bar q)&:=\frac 1T\inf_{(\rho,q)\in \A_{T,m}(\bar q)} \I_{\lbrack0,T\rbrack}(\rho,q)\,,
\notag
\end{align}
where the infima are taken over the set,
\begin{multline}
\mathcal A_{T,m}(\overline q):= \Big\{(\rho, q)\in W^{1,1}(\lbrack 0,T\rbrack;\M_m(\X))\times L^1(\lbrack 0,T\rbrack;\M(E)) \;:\; \\
\dot\rho+\div q=0,\, \mfrac 1T\int_0^Tq(t)\,dt=\overline q \Big\}.
\label{eq:defA}
\end{multline}
Note there is a slight ambiguity in the notation~\eqref{eq:PsiT} since $\rho^*$ fixes the total mass $m$, but we include the index $m$ for consistency with the other function.
The following auxiliary proposition allows to show that the functions
\begin{align*}
  \Psi_m(\bar q\mid\rho^*):=\lim_{T\to +\infty}\Psi_{T,m}(\bar q\mid\rho^*) &&\text{and}&&
  \Psi_m(\bar q):=\lim_{T\to +\infty}\Psi_{T,m}(\bar q)
\end{align*}
are well defined and have certain properties. In particular, in this limit for large times, the dependence on the initial condition vanishes. 

\begin{proposition}[Well-posedness of $\Psi$] For any $\overline q\in\M(E)$ and $\rho^*\in \mathcal M_m(\X)$;
	\begin{enumerate}[(i)]
		\item $\Psi_m(\overline q\mid\rho^*)=\inf_{T>0} \Psi_{T,m}(\overline q\mid\rho^*)$ and $\Psi_m(\overline q)=\inf_{T>0} \Psi_{T,m}(\overline q)$ and hence the limits are well-defined;
		\item $\Psi_m(\overline q|\rho^*)$ and $\Psi_m(\overline q)$ are convex in $\overline q$;
		\item $\Psi_m(\overline q)=\infty$ if $\div\bar q\neq0$;
		\item $\Psi_m(\overline q\mid\rho^*)=\Psi_m(\overline q)$ \,\,(i.e. $\Psi_m$ does not depend on $\rho^*$); 
		\item the convergence in item $(i)$ is a pointwise and a Gamma-convergence. \label{it:Gamma-convergence}
		\end{enumerate}
	\label{prop:well-posedness of Psi}
\end{proposition}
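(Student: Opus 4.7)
The plan is to prove the five items in the order (iii), (i), (iv), (ii), (v), exploiting throughout a single mechanism: concatenation of admissible paths in time. The crucial observation is that whenever $\div\bar q=0$ and $(\rho,q)\in\A_{T,m}(\bar q)$, the continuity equation forces $\rho(T)=\rho(0)-T\div\bar q=\rho(0)$, so two such paths can be glued at their common endpoint without mismatch and a single path can be extended periodically.

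Item (iii) is immediate from mass conservation: if $\div\bar q\neq 0$, then any $(\rho,q)\in\A_{T,m}(\bar q)$ satisfies $\rho(T)-\rho(0)=-T\div\bar q$, which is incompatible with $\rho$ taking values in the bounded set $\M_m(\X)$ once $T$ exceeds $m/\lVert\div\bar q\rVert_\infty$. Hence $\A_{T,m}(\bar q)=\emptyset$ and $\Psi_{T,m}(\bar q)=+\infty$ for all large $T$, so the limit is $+\infty$. In the remainder I assume $\div\bar q=0$.

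For the constrained version of (i), given $(\rho_i,q_i)\in\A_{T_i,m}(\bar q)$ starting at $\rho^*$ for $i=1,2$, the observation above yields $\rho_1(T_1)=\rho^*=\rho_2(0)$, so concatenation gives an element of $\A_{T_1+T_2,m}(\bar q)$ starting at $\rho^*$ with additive cost. Hence $T\mapsto T\,\Psi_{T,m}(\bar q\mid\rho^*)$ is subadditive and Fekete's lemma yields $\lim_T\Psi_{T,m}(\bar q\mid\rho^*)=\inf_T\Psi_{T,m}(\bar q\mid\rho^*)$. For (iv), I would prepend to a near-optimal path for the $\tilde\rho^*$-problem a short transition from $\rho^*$ to $\tilde\rho^*$ of bounded cost $C$ over a fixed time $\tau$; such a transition exists by the continuity of $k$ and the observation that any two densities in $\M_m(\X)$ can be connected in finite cost on a positive time interval. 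Dividing by $T+\tau$ and sending $T\to\infty$ yields $\Psi_m(\bar q\mid\rho^*)\leq\Psi_m(\bar q\mid\tilde\rho^*)$ up to a perturbation of order $\tau/T$ in the average flow; after exchanging the roles of $\rho^*$ and $\tilde\rho^*$, equality follows. The unconstrained version of (i) is then obtained from the tautological identity $\Psi_{T,m}(\bar q)=\inf_{\rho^*}\Psi_{T,m}(\bar q\mid\rho^*)$ combined with (iv), which justifies swapping $\inf_T$ and $\inf_{\rho^*}$.

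Item (ii) is obtained by the same concatenation mechanism with different target averages: for $(\rho_i,q_i)\in\A_{T_i,m}(\bar q_i)$ with $\div\bar q_i=0$, the concatenated path has average $\tfrac{T_1}{T_1+T_2}\bar q_1+\tfrac{T_2}{T_1+T_2}\bar q_2$ and total cost equal to the sum, so choosing $T_1:T_2$ to approximate any rational $\lambda\in(0,1)$ and passing to the limit yields convexity. For (v), the $\liminf$ inequality follows from $\Psi_m(\bar q)\leq\Psi_{T,m}(\bar q)$ and lower semicontinuity of $\Psi_m$ in $\bar q$; a recovery sequence at a constant target $\bar q$ is built by taking a near-minimiser on a long window $[0,T^*]$ and repeating it periodically, which is legitimate since $\rho(T^*)=\rho(0)$. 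The main obstacle throughout is the bookkeeping of the average flow when inserting transition segments: each such segment contributes its own integrated flow and shifts $\bar q$ by $O(\tau/T)$, so obtaining the \emph{exact} identities rather than inequalities at nearby points requires either constructing transitions with a prescribed integrated flow or exploiting the lower semicontinuity of $\Psi_m$ that is ultimately supplied by the convexity in (ii).
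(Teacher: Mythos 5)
Your proposal is correct and follows essentially the same route as the paper's own (sketched) proof: gluing paths in time to get subadditivity for (i) and convexity for (ii), mass conservation forcing a negative density for (iii), a finite-time finite-cost connecting path for (iv), and combining (i) and (ii) for the Gamma-convergence in (v). You are in fact somewhat more careful than the paper in flagging the $O(\tau/T)$ perturbation of the average flow caused by inserted transition segments and in deriving the unconstrained half of (i) via $\Psi_{T,m}(\overline q)=\inf_{\rho^*}\Psi_{T,m}(\overline q\mid\rho^*)$, where direct concatenation of unconstrained near-minimisers would not match endpoints.
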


\begin{proof}[Sketch of proof]
The proof follows the same arguments of \cite{BDSGJLL2005CurFluct, MR2227084, BDSGJLL2007LdpEmpCur}, and we do not discuss the full details. The subadditivity that implies existence and item $(i)$ is obtained by glueing paths together one after the other. Convexity follows by a similar construction.  If $\div \bar q\neq 0$ we obtain after a long enough time a negative mass somewhere that is impossible. The independence of the initial condition follows by the fact that one can construct a finite-time finite-cost path between any two initial conditions. The fact that the convergence is also a Gamma-convergence is obtained by combining item $(i)$ and $(ii)$ along the lines of \cite[Sec.~4]{BDSGJLL2007LdpEmpCur}.
\end{proof}

We can now state the main statement of this section.
\begin{theorem}\quad
\begin{enumerate}[(i)]
\item The average flow $\overline Q^N(T)$ satisfies a large-deviation principle with speed $N$ and rate functional 
$$
\bar q \to T\Psi_{T,m}(\bar q\mid \rho^*).
$$
\item The average flow $\overline Q^N(T)$ satisfies a large-deviation principle (where first $N\to\infty$ and then $T\to\infty$) with speed $NT$ and rate functional $\Psi_m(\bar q)$.
\end{enumerate}
\label{th:average flux LDP}
\end{theorem}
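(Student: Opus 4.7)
The plan is to derive part (i) from Theorem \ref{th:discr path ldp} by a contraction principle, and then obtain part (ii) from part (i) using the pointwise convergence $\Psi_{T,m}\to\Psi_m$ established in Proposition \ref{prop:well-posedness of Psi}.

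For part (i), I would consider the endpoint-averaging map $\pi_T:(\rho,w)\mapsto w(T)/T$ on the path space $\BV([0,T];\M_m(\X)\times\M(E))$. Since $W^N(0)=0$, one has $\overline Q^N(T)=\pi_T(\rho^N,W^N)$. Any $(\rho,w)$ of finite cost under $\I_{[0,T]}(\cdot,\cdot\mid\rho^*)$ satisfies $\dot\rho+\div q=0$ (with $q=\dot w$) and $\rho(0)=\rho^*$, so the constraint $\pi_T(\rho,w)=\overline q$ is equivalent to $(\rho,q)\in\A_{T,m}(\overline q)$. Applying the contraction principle to Theorem \ref{th:discr path ldp} then yields the rate function
$$
\inf\bigl\{\I_{[0,T]}(\rho,\dot w\mid\rho^*):\pi_T(\rho,w)=\overline q\bigr\}\;=\;T\Psi_{T,m}(\overline q\mid\rho^*)
$$
by the definition \eqref{eq:PsiT}.

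For part (ii), Proposition \ref{prop:well-posedness of Psi} guarantees the pointwise limit $\Psi_{T,m}(\cdot\mid\rho^*)\to\Psi_m$ as $T\to\infty$, together with the inequality $\Psi_{T,m}(\cdot\mid\rho^*)\geq\Psi_m$. For a closed set $F$, the upper bound from part (i), divided by $T$, reads
$$
\limsup_{N\to\infty}\tfrac{1}{NT}\log\Prob\bigl(\overline Q^N(T)\in F\bigr)\;\leq\;-\inf_F\Psi_{T,m}(\cdot\mid\rho^*)\;\leq\;-\inf_F\Psi_m,
$$
and letting $T\to\infty$ preserves the bound. For an open set $G$ and each $\overline q\in G$, the lower bound from part (i), divided by $T$, gives
$$
\liminf_{N\to\infty}\tfrac{1}{NT}\log\Prob\bigl(\overline Q^N(T)\in G\bigr)\;\geq\;-\Psi_{T,m}(\overline q\mid\rho^*);
$$
letting first $T\to\infty$ (pointwise convergence) and then taking the infimum over $\overline q\in G$ yields $\liminf_{T\to\infty}\liminf_{N\to\infty}\tfrac{1}{NT}\log\Prob\geq-\inf_G\Psi_m$, completing the iterated LDP.

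The main obstacle will be the continuity hypothesis for the contraction principle in part (i): the endpoint evaluation $w\mapsto w(T)$ is generally not continuous on $\BV$ paths under narrow convergence, since mass could accumulate near $T$. I expect this to be resolvable by working in the topology of \cite{PattersonRenger2019} and exploiting the monotonicity of each component of $W^N$, which makes right-endpoint evaluation continuous on non-decreasing paths; alternatively, one can bypass the general contraction principle by rewriting $\{\overline Q^N(T)\in F\}$ or $\{\overline Q^N(T)\in G\}$ as an appropriate pre-image in path space and applying the upper and lower bounds of Theorem \ref{th:discr path ldp} directly to that pre-image.
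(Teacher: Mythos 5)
Your proposal is correct and follows essentially the same route as the paper: part (i) is the same contraction-principle argument, and part (ii) uses exactly the ingredients of Proposition~\ref{prop:well-posedness of Psi} (the monotonicity $\Psi_{T,m}\geq\Psi_m$ from item (i) and the pointwise limit), which together are precisely the content of the Gamma-convergence statement (item (v)) that the paper invokes to pass the LDP to the limit $T\to\infty$. Your explicit upper/lower bound derivation simply unpacks that Gamma-convergence argument, and your caveat about continuity of the endpoint evaluation is a reasonable refinement of what the paper calls a ``straight-forward'' contraction.
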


\begin{proof}[Sketch of proof] 
The first statement follows from a straight-forward contraction principle~\cite[Th.~4.2.1]{Dembo1998}.
The second item can be written formally as
\begin{equation*}
  -\mfrac{1}{NT}\log\PP^N\big(\{\overline Q^N(T) \approx \overline q\}\big) \xrightarrow[N\to\infty]{} \mfrac1T \Psi_{T,m}(\overline q\mid\rho^*) \xrightarrow[T\to\infty]{} \Psi_m(\overline q).
\end{equation*}
 In order for the statement to be interpretable as a large-deviation principle, the second convergence must be a Gamma limit, which holds by Proposition~\ref{prop:well-posedness of Psi}\eqref{it:Gamma-convergence}.
\end{proof}

\section{Dynamical phase transitions I: absence and occurrence}
\label{sec:convexity}

We now consider a system that consists of the graph $(\X,E)$, jump rates $k$ and total mass $m$ as explained in the previous section. In this section we study dynamical phase transitions for the general function $\Psi_m$, and study when phase transitions can~\emph{not} occur, and when they do occur. We then illustrate the conditions for a simple two-state example.

Let us first provide a more precise definition of dynamical phase transitions, following the general framework of \cite{BDSGJLL2005CurFluct,MR2227084,BDSGJLL2007LdpEmpCur,BDSGJLL2015MFT}:
\begin{definition}
We say that the system undergoes a dynamical phase transition in the average flow $\overline q\in\M(E)$ if
\begin{equation*}
  \Psi_m\left(\overline q\right) < \Psi^+_m\left(\overline q\right),
\end{equation*} 
where
\begin{equation}
  \Psi^+_m\left(\overline q\right) =
  \begin{cases}
    \inf_{\rho\in \M_m(\X)}\sum_{(x,y)\in E}s\left(\overline q_{x,y}\mid k_{x,y}(\rho)\right),  & \div\overline{q}=0,\\
    \infty,                                                                                     &\text{otherwise}.
  \end{cases}
\label{eq:defft}
\end{equation}
\label{def:DFT}
\end{definition}
Recall that $\Psi_m(\overline q)$ is the large-deviation cost to observe an atypical average flow $\overline q$. If the system does not undergo a dynamical phase transition for this $\overline q$, then the infimum over $\A_{T,m}$ can be taken over paths that are constant in time, which corresponds to an (atypical) stationary steady state. On the other hand, if a dynamical phase transition occurs at $\overline q$, then the system will have time-dependent behaviour when constrained to have an atypical flow $\overline q$. Furthermore, recall from~Proposition~\ref{prop:well-posedness of Psi} that when $\div\overline{q}\neq0$ then $\Psi_m(\overline q)=\infty$ and hence there can not be a dynamical phase transition.

\subsection{Dynamical phase transitions and lower and upper bounds}
\label{subsec:rule out}

We now study upper and lower bounds for $\Psi_m$, from which we deduce a sufficient condition under which dynamical phase transition can not occur, and a sufficient condition under which it does.
The upper bound $\Psi^+_m$ is already introduced in~\eqref{eq:defft}. A lower bound is constructed as follows. Considering the rate $k$ as an operator $\M_m(\X)\to \mathbb R_+^E$ with range
\begin{equation}
  \Ran_m(k):=\{(k_{x,y}(\rho))_{(x,y)\in E}\in \RR_+^E : \rho\in\M_m(\X) \},
\label{eq:Ran}
\end{equation}
one can rewrite the upper bound as
\begin{equation*}
  \Psi^+_m(\overline q)=\inf_{K\in \Ran_m(k)} \sum_{(x,y)\in E} s(\overline q_{x,y} \mid K_{x,y})\,.
\label{eq:equipsi+}
\end{equation*}
For the lower bound we introduce, denoting by $\co$ the convex hull,
\begin{equation*}
\label{eq:equipsi-}
  \Psi^-_m(\overline q):=\inf_{K\in \textrm{co}(\Ran_m(k))} \sum_{(x,y)\in E}s(\overline q_{x,y}\mid K_{x,y})\,.
\end{equation*}

\begin{proposition} For any divergence-free $\overline q$,
	\begin{equation*}
  	\Psi^-_m(\overline q)\leq \Psi_m(\overline q)\leq \Psi^+_m(\overline q)\,.
	\end{equation*}
	\label{prop:Psi bounds}
\end{proposition}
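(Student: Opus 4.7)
The plan is to establish the two inequalities by completely different means: the upper bound by a stationary trial path, and the lower bound by Jensen's inequality applied to each path in $\A_{T,m}(\overline q)$.

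For the upper bound $\Psi_m(\overline q)\leq \Psi^+_m(\overline q)$, I would fix any $\rho\in\M_m(\X)$ and use the constant ansatz $\rho(t)\equiv\rho$, $q(t)\equiv \overline q$. Since $\overline q$ is divergence-free, the continuity equation $\dot\rho+\div q=0$ is trivially satisfied, and the time-average constraint in \eqref{eq:defA} is satisfied by construction, so $(\rho,\overline q)\in\A_{T,m}(\overline q)$ for every $T>0$. The rate functional reduces to $\I_{[0,T]}(\rho,\overline q)=T\sum_{(x,y)\in E}s(\overline q_{x,y}\mid k_{x,y}(\rho))$, so $\Psi_{T,m}(\overline q)\leq \sum_{(x,y)\in E}s(\overline q_{x,y}\mid k_{x,y}(\rho))$. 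Passing to the infimum over $\rho$ and using Proposition~\ref{prop:well-posedness of Psi}(i) yields $\Psi_m(\overline q)\leq \Psi^+_m(\overline q)$.

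For the lower bound $\Psi^-_m(\overline q)\leq \Psi_m(\overline q)$, the key input is the joint convexity of $(a,b)\mapsto s(a\mid b)$ on $\RR_+\times\RR_+$, a standard property of relative entropy. Fix $T>0$ and any $(\rho,q)\in\A_{T,m}(\overline q)$. For each edge $(x,y)\in E$, Jensen's inequality applied to $s$ gives
\begin{equation*}
  \mfrac{1}{T}\int_0^T\!s\bigl(q_{x,y}(t)\mid k_{x,y}(\rho(t))\bigr)\,dt \;\geq\; s\bigl(\overline q_{x,y}\mid \overline K_{x,y}\bigr), \qquad \overline K_{x,y}:=\mfrac{1}{T}\int_0^T\!k_{x,y}(\rho(t))\,dt,
\end{equation*}
where I used the defining constraint $\tfrac1T\int_0^T q(t)\,dt=\overline q$. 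Summing over edges shows $\tfrac1T\I_{[0,T]}(\rho,q)\geq \sum_{(x,y)\in E}s(\overline q_{x,y}\mid \overline K_{x,y})$. Since $\X$ and $E$ are finite and $k$ is continuous on the compact set $\M_m(\X)$, the set $\Ran_m(k)$ is compact; the time-average $\overline K$ is a limit of Riemann sums, each of which is a convex combination of elements of $\Ran_m(k)$, hence $\overline K\in\co(\Ran_m(k))$. Thus $\tfrac1T\I_{[0,T]}(\rho,q)\geq \Psi^-_m(\overline q)$, and infimising over $(\rho,q)$ and $T$ (again via Proposition~\ref{prop:well-posedness of Psi}(i)) gives $\Psi_m(\overline q)\geq \Psi^-_m(\overline q)$.

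I expect no serious obstacle: the upper bound is immediate from an explicit trial, and the lower bound reduces to Jensen's inequality plus the elementary observation that time-averages land in the convex hull of the range. The only point that demands a little care is verifying that the convex hull in the definition of $\Psi^-_m$ is large enough to contain $\overline K$; this is where the compactness of $\Ran_m(k)$ (ensured by continuity of $k$ and compactness of $\M_m(\X)$) makes the argument clean. If one wished to be pedantic about measurability of $t\mapsto k_{x,y}(\rho(t))$, this follows because $\rho\in W^{1,1}$ and $k_{x,y}$ is Lipschitz.
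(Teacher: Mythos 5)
Your proposal is correct and follows essentially the same route as the paper's proof: the upper bound via the constant trial path $(\rho,\overline q)\in\A_{T,m}(\overline q)$ optimised over $\rho$, and the lower bound via Jensen's inequality applied to the jointly convex $s(\cdot\mid\cdot)$ together with the observation that $\tfrac1T\int_0^T\!k(\rho(t))\,dt\in\co(\Ran_m(k))$. The extra care you take in justifying that the time-average lies in the convex hull (compactness of $\Ran_m(k)$ in the finite-dimensional space $\RR^E$) is a detail the paper leaves implicit but is a welcome addition.
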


\begin{proof}
	Take an arbitrary $\rho\in\M_m(\X)$. Since $\overline q$ is divergence-free, the constant path $(\rho,\overline q)$ is an element of $\A_{T,m}(\overline q)$. Therefore
	\begin{align*}
  	\Psi_m(\overline q) \leq \lim_{T\to\infty} \mfrac{1}{T}\sum_{(x,y)\in E}\int_0^T\!s\big(\overline q_{x,y} \mid k_{x,y}(\rho)\big)\,dt = \sum_{(x,y)\in E} s\big(\overline q_{x,y} \mid k_{x,y}(\rho)\big).
	\end{align*}
	Optimising over all possible $\rho\in\M_m(\X)$ yields the upper bound.
	
  Now take an arbitrary path $(\rho,q)\in \mathcal A_{T,m}(\overline q)$. Applying Jensen's inequality to the jointly convex function $s(\cdot\mid\cdot)$ yields
	\begin{equation*}
	\sum_{(x,y)\in E}\frac 1T\int_0^T\!s\big(q_{x,y}(t)\mid k_{x,y}(\rho(t)\big)\,dt\geq \sum_{(x,y)\in E} s\Big(\overline q_{x,y}\mid {\textstyle \frac 1T\int_0^T\!k_{x,y}(\rho(t))\,dt} \Big)\geq \Psi^-_m(\overline q)\,,
	\end{equation*}
	because $T^{-1}\int_0^T\!k(\rho(t))\,dt \in \co(\Ran_m(k))$. The claimed lower bound is obtained by taking the infimum over all paths $(\rho(t), q(t))_{t\in\lbrack0,T\rbrack}\in \mathcal A_{T,m}(\overline q)$ and then letting $T\to\infty$.
\end{proof}


As a direct consequence we obtain the first condition to rule out phase transitions.
\begin{corollary}
If the set $\Ran_m(k)$ is convex then there are no dynamical phase transitions for any value of $\overline q$.
\label{cor:convex Ran}
\end{corollary}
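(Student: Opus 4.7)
The plan is to deduce the corollary directly from the sandwich inequality in Proposition~\ref{prop:Psi bounds} by showing that under the convexity hypothesis the lower and upper bounds coincide. Recall from Definition~\ref{def:DFT} that a dynamical phase transition at $\overline q$ is the \emph{strict} inequality $\Psi_m(\overline q)<\Psi_m^+(\overline q)$, so it suffices to prove $\Psi_m(\overline q)=\Psi_m^+(\overline q)$ for every $\overline q\in \M(E)$.

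First I would split according to whether $\overline q$ is divergence-free. If $\div \overline q\neq 0$, then Proposition~\ref{prop:well-posedness of Psi}(iii) gives $\Psi_m(\overline q)=+\infty$, and by definition $\Psi_m^+(\overline q)=+\infty$ as well, so the two coincide trivially. Hence the only nontrivial case is $\div \overline q=0$.

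For divergence-free $\overline q$, I would use the equivalent reformulation
\begin{equation*}
  \Psi^+_m(\overline q)=\inf_{K\in \Ran_m(k)}\sum_{(x,y)\in E}s(\overline q_{x,y}\mid K_{x,y}),\qquad
  \Psi^-_m(\overline q)=\inf_{K\in \co(\Ran_m(k))}\sum_{(x,y)\in E}s(\overline q_{x,y}\mid K_{x,y}),
\end{equation*}
noted just before and within \eqref{eq:equipsi-}. When $\Ran_m(k)$ is convex, $\co(\Ran_m(k))=\Ran_m(k)$, so the two infima are taken over the same set and $\Psi_m^-(\overline q)=\Psi_m^+(\overline q)$. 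Combining this equality with the chain of inequalities from Proposition~\ref{prop:Psi bounds},
\begin{equation*}
  \Psi_m^+(\overline q)=\Psi_m^-(\overline q)\leq \Psi_m(\overline q)\leq \Psi_m^+(\overline q),
\end{equation*}
forces $\Psi_m(\overline q)=\Psi_m^+(\overline q)$, ruling out a dynamical phase transition at $\overline q$. Since $\overline q$ was arbitrary, the claim follows.

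There is essentially no obstacle here beyond unpacking the definitions: the entire content of the corollary is contained in the sandwich of Proposition~\ref{prop:Psi bounds} together with the tautology that convexity of $\Ran_m(k)$ collapses $\co(\Ran_m(k))$ onto itself. The only thing to be careful about is the divergence-constrained case, which is handled separately by Proposition~\ref{prop:well-posedness of Psi}(iii) as above.
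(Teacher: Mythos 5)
Your proof is correct and follows exactly the paper's route: the corollary is an immediate consequence of the sandwich $\Psi_m^-\leq\Psi_m\leq\Psi_m^+$ from Proposition~\ref{prop:Psi bounds} together with the observation that convexity of $\Ran_m(k)$ makes $\co(\Ran_m(k))=\Ran_m(k)$ and hence $\Psi_m^-=\Psi_m^+$. Your explicit handling of the case $\div\overline q\neq 0$ is a small extra care the paper leaves implicit, but the argument is the same.
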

\begin{proof}
  If $\Ran_m(k)$ is convex then $\Psi_m^-(\overline q)=\Psi_m(\overline q)=\Psi_m^+(\overline q)$.
\end{proof}

\begin{remark}
In this paper we are restricting to conservative models for which the total mass is preserved. The arguments are however easily extendible to non conservative models. In this case $\Ran(k)$ may be unbounded and the validity for example of condition of Corollary \ref{cor:convex Ran} are more likely to be verified.
This seems to confirm the phenomenon observed in \cite{PhysRevLett.102.250601} for which the addition of boundary conditions non preserving mass may inhibit dynamic phase transitions.
\end{remark}

The upper bound in Proposition \ref{prop:Psi bounds} can be improved, and the lower bound can be replaced by another one. To describe the lower bound, recall from~\eqref{eq:defft} that $\Psi_m^+(\overline{q})=+\infty$ if $\div\overline{q}\neq 0$. By contrast, we define another function
\begin{equation*}
  \Psi_m^\#(\overline{q}):=\inf_{\rho\in \M_m(\X)}\sum_{(x,y)\in E}s\left(\overline q_{x,y}\mid k_{x,y}(\rho)\right),
\end{equation*}
defined even in the case $\div \overline q\neq 0$, so that that $\Psi_m^\#(\overline q)=\Psi_m^+(\overline q)$ when $\div \overline q=0$. In the following lower and upper bounds ``$\co$'' now denotes the convex envelope.

\begin{proposition}\label{disturba}
For any $\overline{q}$,
\begin{equation*}
  \left(\co\Psi_m^\#\right)(\overline{q})\leq\Psi_m(\overline{q}) \leq \left(\co\Psi_m^+\right)(\overline{q})\,.
\end{equation*}
\end{proposition}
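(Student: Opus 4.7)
The plan is as follows. The upper bound follows almost immediately from what has already been established: Proposition~\ref{prop:well-posedness of Psi}(ii) gives convexity of $\Psi_m$, and Proposition~\ref{prop:Psi bounds} gives $\Psi_m \leq \Psi_m^+$ pointwise. Since $\co \Psi_m^+$ is by definition the largest convex minorant of $\Psi_m^+$, the function $\Psi_m$ (being itself a convex minorant of $\Psi_m^+$) must satisfy $\Psi_m \leq \co \Psi_m^+$.

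For the lower bound, I plan to combine two observations in series. First, for any admissible pair $(\rho, q) \in \A_{T,m}(\overline q)$ and any $t \in [0,T]$, the instantaneous integrand satisfies
\begin{equation*}
  \sum_{(x,y)\in E} s\big(q_{x,y}(t) \mid k_{x,y}(\rho(t))\big) \;\geq\; \Psi_m^\#(q(t)) \;\geq\; (\co \Psi_m^\#)(q(t)),
\end{equation*}
by the definition of $\Psi_m^\#$ as an infimum over $\rho$ and by the definition of the convex envelope. Next, integrating in time over $[0,T]$, dividing by $T$, and applying Jensen's inequality to the convex function $\co \Psi_m^\#$ yields
\begin{equation*}
  \frac 1T \I_{[0,T]}(\rho, q) \;\geq\; \frac 1T \int_0^T (\co\Psi_m^\#)(q(t))\, dt \;\geq\; (\co\Psi_m^\#)\Big(\frac 1T \int_0^T q(t)\, dt\Big) = (\co\Psi_m^\#)(\overline q).
\end{equation*}
Taking the infimum over $(\rho, q) \in \A_{T,m}(\overline q)$ gives $\Psi_{T,m}(\overline q) \geq (\co\Psi_m^\#)(\overline q)$ for every $T>0$, and passing to the limit $T \to \infty$ via Proposition~\ref{prop:well-posedness of Psi}(i) completes the argument.

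I do not anticipate a serious obstacle. The key point is the idea of passing through $\co \Psi_m^\#$ \emph{before} applying Jensen, since $\Psi_m^\#$ itself need not be convex; the convex envelope absorbs the Jensen gap by construction. Minor technical points—measurability of $\co \Psi_m^\#$ on the finite-dimensional cone $\M(E)$ as required for Jensen, and the fact that on $\{\div \overline q \neq 0\}$ both $\Psi_m$ and $\co \Psi_m^+$ equal $+\infty$ so that the upper bound there is trivially satisfied—are easily checked.
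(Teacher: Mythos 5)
Your proposal is correct and follows essentially the same route as the paper's proof: the upper bound via convexity of $\Psi_m$ (Proposition~\ref{prop:well-posedness of Psi}(ii)) together with $\Psi_m\leq\Psi_m^+$, and the lower bound by bounding the integrand pointwise by $\Psi_m^\#\geq\co\Psi_m^\#$ and then applying Jensen's inequality to the convex envelope. No meaningful differences to report.
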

\begin{proof}
The upper bound follows by the upper bound in \ref{prop:Psi bounds} and the convexity of $\Psi_m$ (property $(ii)$ of Proposition~\ref{prop:well-posedness of Psi}).

The lower bound is obtained by the following argument. For any $T>0$ and any pair $(\rho,q)\in \A_{T,m}(\upbar q)$,
\begin{align*}
\frac 1T \mathcal I_{[0,T]}(\rho,q)&=\frac 1T\int_0^T\!\sum_{(x,y)\in E} s\big(q_{x,y}(t)\mid k_{x,y}(\rho(t))\big)\,dt\\
&\geq \frac 1T\int_0^T\!\Psi^\#_m(q(t))\,dt \geq \frac 1T\int_0^T\!\left(\co\Psi^\#_m\right)(q(t))\,dt\stackrel{\text{(Jensen)}}{\geq} \left(\co\Psi^\#_m\right)(\overline{q})\,.
\end{align*}
The result follows by minimising over $(\rho,q)\in \A_{T,m}(\upbar q)$ and taking $T\to\infty$.
\end{proof}

Together with the theory from Section~\ref{subsec:ldp time-average} we obtain a condition under which a phase transition occurs. 
\begin{corollary}
If $\Psi^+_m$ is not convex, then the system undergoes a dynamical phase transition.
\label{cor:Psiplus not convex}
\end{corollary}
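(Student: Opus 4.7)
The plan is to combine the upper bound $\Psi_m(\overline{q}) \leq (\co\Psi_m^+)(\overline{q})$ from Proposition~\ref{disturba} with the observation that any failure of convexity of $\Psi_m^+$ produces a strict inequality $(\co\Psi_m^+)(\overline{q}) < \Psi_m^+(\overline{q})$ at some point, and this in turn yields a dynamical phase transition in the sense of Definition~\ref{def:DFT}.

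First, I would unpack what non-convexity of $\Psi_m^+$ gives. There must exist $\overline{q}_1,\overline{q}_2\in\M(E)$ and $\lambda\in(0,1)$ with
\[
  \Psi_m^+\bigl(\lambda\overline{q}_1 + (1-\lambda)\overline{q}_2\bigr) \,>\, \lambda\Psi_m^+(\overline{q}_1) + (1-\lambda)\Psi_m^+(\overline{q}_2),
\]
and the right-hand side is necessarily finite. Since by \eqref{eq:defft} the function $\Psi_m^+$ is finite only on divergence-free flows, both $\overline{q}_1$ and $\overline{q}_2$, and hence the convex combination $\overline{q}:=\lambda\overline{q}_1+(1-\lambda)\overline{q}_2$, are divergence-free. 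The very definition of the convex envelope then gives
\[
  (\co\Psi_m^+)(\overline{q}) \,\leq\, \lambda\Psi_m^+(\overline{q}_1) + (1-\lambda)\Psi_m^+(\overline{q}_2) \,<\, \Psi_m^+(\overline{q}).
\]

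Combining with Proposition~\ref{disturba} yields
\[
  \Psi_m(\overline{q}) \,\leq\, (\co\Psi_m^+)(\overline{q}) \,<\, \Psi_m^+(\overline{q}),
\]
which is exactly the strict inequality required by Definition~\ref{def:DFT} at this divergence-free $\overline{q}$, establishing a dynamical phase transition.

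I do not anticipate a serious obstacle. The only place to be careful is making sure the point witnessing non-convexity lies in the effective domain of $\Psi_m^+$, but this is automatic from the $+\infty$ extension off the divergence-free subspace. The argument relies crucially on having the convex envelope on the upper side in Proposition~\ref{disturba}, which is available because $\Psi_m$ is convex (Proposition~\ref{prop:well-posedness of Psi}(ii)) and sits below $\Psi_m^+$ pointwise.
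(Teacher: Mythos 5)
Your proof is correct and follows essentially the same route as the paper's: both arguments observe that non-convexity of $\Psi_m^+$ forces $(\co\Psi_m^+)(\overline q)<\Psi_m^+(\overline q)$ at some $\overline q$, and then invoke the upper bound $\Psi_m\leq\co\Psi_m^+$ from Proposition~\ref{disturba} to conclude $\Psi_m(\overline q)<\Psi_m^+(\overline q)$. The extra care you take (finiteness of the witnessing convex combination and the divergence-free check) is sound but not a departure from the paper's argument.
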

\begin{proof}
In this case there exists a $\overline q\in \mathcal M(E)$ such that we have for the convex envelope $(\co\Psi^+_m)(\overline{q})<\Psi^+_m(\overline{q})$. By the upper bound in Proposition~\ref{disturba} we deduce that $\Psi_m(\overline{q})<\Psi^+_m(\overline{q})$ so that we have a dynamic phase transition at $\overline q$.
\end{proof}
When $\Psi_m^+$ is not convex and $\Psi_m=\co\Psi_m^+$ the typical behaviour of the system is to spend fractions of times with different typical densities and flows (see \cite{BDSGJLL2005CurFluct,MR2227084} for an extended discussion).

\subsection{Dynamical phase transitions and joint convexity}
\label{subsec:joint convexity}

In this section we derive an alternative condition to rule out dynamical phase transitions, again by a convexity argument, but now by joint convexity of the function,
\begin{equation}
  \M_m(\X)\times\M(E) \ni (\rho,q) \mapsto \sum_{(x,y)\in E} s(q_{x,y} \mid k_{x,y}(\rho)) \in \RR_+\,.
\label{eq:sk}
\end{equation}

\begin{proposition}
If the jump rates $k_{x,y}$ are such that the function~\eqref{eq:sk} is jointly convex in $(\rho,q)\in \M_m(\X)\times\M(E)$, then there are no dynamical phase transitions.
	\label{prop:jointly convex}
\end{proposition}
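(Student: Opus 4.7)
The plan is to prove $\Psi_m(\overline q)=\Psi_m^+(\overline q)$ for every $\overline q$, which by Definition~\ref{def:DFT} rules out dynamical phase transitions. The upper bound $\Psi_m(\overline q)\leq \Psi_m^+(\overline q)$ is already available from Proposition~\ref{prop:Psi bounds}, so the content of the proof is the matching lower bound. The divergence-full case is free, since by Proposition~\ref{prop:well-posedness of Psi}(iii) both quantities equal $+\infty$ when $\div \overline q\neq 0$, so we may assume $\div\overline q=0$.

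The key step is to apply Jensen's inequality directly to the hypothesised jointly convex functional $f(\rho,q):=\sum_{(x,y)\in E} s(q_{x,y}\mid k_{x,y}(\rho))$. Pick any $T>0$ and any $(\rho,q)\in\A_{T,m}(\overline q)$ and define the time-averaged density
\begin{equation*}
  \overline\rho_T:=\frac 1T\int_0^T \rho(t)\,dt.
\end{equation*}
Since $\M_m(\X)$ is convex (non-negative measures of fixed total mass $m$), we have $\overline\rho_T\in\M_m(\X)$; by the defining constraint of $\A_{T,m}(\overline q)$ the time-averaged flow is exactly $\overline q$. Joint convexity of $f$ and Jensen's inequality then give
\begin{equation*}
  \frac{1}{T}\I_{[0,T]}(\rho,q)=\frac 1T\int_0^T f(\rho(t),q(t))\,dt \;\geq\; f(\overline\rho_T,\overline q)=\sum_{(x,y)\in E} s\bigl(\overline q_{x,y}\mid k_{x,y}(\overline\rho_T)\bigr)\;\geq\; \Psi_m^+(\overline q),
\end{equation*}
where the last inequality is the definition of $\Psi_m^+$ as an infimum over $\rho\in\M_m(\X)$ (noting that $\div\overline q=0$ selects the first branch of~\eqref{eq:defft}). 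Taking the infimum over $(\rho,q)\in\A_{T,m}(\overline q)$ yields $\Psi_{T,m}(\overline q)\geq \Psi_m^+(\overline q)$, and letting $T\to\infty$ gives $\Psi_m(\overline q)\geq \Psi_m^+(\overline q)$ as desired.

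I do not expect a genuine obstacle. The only slightly subtle point is the joint application of Jensen in the two variables $\rho$ and $q$ simultaneously, but this is exactly where the hypothesis of \emph{joint} convexity (rather than merely convexity of $s$ in its two arguments, which was enough for Proposition~\ref{prop:Psi bounds}) is used; in particular the inner averaging is performed on $\rho$ rather than on $k(\rho)$, so no convex-hull enlargement of $\mathrm{Ran}_m(k)$ is needed, and one recovers $\Psi_m^+$ exactly rather than $\Psi_m^-$.
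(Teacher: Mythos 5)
Your proof is correct and follows essentially the same route as the paper: apply Jensen's inequality to the jointly convex functional $(\rho,q)\mapsto\sum_{(x,y)\in E}s(q_{x,y}\mid k_{x,y}(\rho))$ along any admissible path, note that the time-averaged density stays in $\M_m(\X)$ and the time-averaged flow equals $\overline q$, and conclude $\Psi_m(\overline q)\geq\Psi_m^+(\overline q)$. Your explicit handling of the $\div\overline q\neq 0$ case and the closing remark contrasting this with the $\Psi_m^-$ bound of Proposition~\ref{prop:Psi bounds} are accurate but not needed beyond what the paper already records.
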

\begin{proof}
  Consider a pair $(\rho,q)\in \A_{T,m}(\upbar q)$. By Jensen's inequality:
	\begin{align*}
	\frac 1T \mathcal I_{[0,T]}(\rho,q)&=\frac 1T\int_0^T\!\sum_{(x,y)\in E} s\big(q_{x,y}(t)\mid k_{x,y}(\rho(t))\big)\,dt\\
	&\geq \sum_{(x,y)\in E} s\!\left({\textstyle \frac 1T\int_0^T\! q_{x,y}(t)\,dt} \mid {\textstyle k_{x,y}\big(\frac 1T \int_0^T\!\rho(t)\,dt\big)}\right)\\
	&=:\sum_{(x,y)\in E} s\left(\overline q_{x,y}\mid k_{x,y}(\tilde \rho_T)\right)\geq \Psi^+_m(\upbar q)\,,
	\end{align*}
	where we used the definition of $\mathcal A_{T,m}(\overline q)$ and we called  $\tilde \rho_T:=\frac 1T\int_0^T\rho_s ds\in \M_m(\X)$. Taking the infimum over all $(\rho, q)\in \A_{T,m}(\overline q)$ and then letting $T\to\infty$ yields $\Psi_m(\overline q) \geq \Psi^+_m(\overline q)$, which shows that there are no dynamical phase transitions.
\end{proof}
As we will illustrate in Section~\ref{subsec:2d example} the above proposition is not very powerful, but can be strengthened by requiring joint convexity in a small neighbourhood only.

\begin{remark}
In~\cite{BodineauDerrida2005}, the authors construct a dynamical phase transition by considering small perturbations around time independent trials $(\tilde\rho, \bar q)$. We consider $(\rho_\epsilon(t), q_\epsilon(t))$ where $q_\epsilon(t)= \bar q + \epsilon v(t)$, for some periodic perturbation $v$ such that $\int_0^Tv(s)ds=0$ and $\rho_\epsilon$ is determined by the continuity equation. The first variation $\tfrac{d}{d\epsilon}\I_{(0,T)}(\rho_\epsilon, q_\epsilon)\rvert_{\epsilon=0}=0$ since $(\tilde\rho,\bar q)$ is a critical point for the time independent trials. If one can find a $v(t)$ for which the second variation becomes negative, this shows that a small periodic perturbation can decrease the cost of a constant state, which corresponds to a dynamical phase transition. For a general local cost function $\L$ (as in the introduction), the second variation is
\begin{equation*}
  \tfrac{d^2}{d\epsilon^2}\I_{(0,T)}(\rho_\epsilon, q_\epsilon)\rvert_{\epsilon=0} = 
    \int_0^T\! \begin{bmatrix} m(t)\\v(t)\end{bmatrix}^\mathsf{T} \nabla^2\L(\tilde \rho, \overline q) \begin{bmatrix} m(t)\\ v(t)\end{bmatrix} \, dt.
\end{equation*}
where at first order in $\epsilon$ we have $\rho_\epsilon(t)=\tilde\rho+\epsilon m(t)$.
Therefore, the second variation cannot become negative if $\L$ is jointly convex, which is related to our Proposition~\ref{prop:jointly convex}.
\label{rem:second variation}
\end{remark}

\subsection{A simple two-state example}
\label{subsec:2d example}

We give a geometric illustration of the conditions from the previous section in the  simplest network with $|\X|=2$, i.e. $\X=\{1,2\}$. In this case $\M_m(\X)=\{(\rho_1,m-\rho_1):\rho_1\in\lbrack0,m\rbrack\}$ is one dimensional, and any choice of the jump rates can be rewritten as a zero-range process; this means that the rate at which the mass flows exiting from a given site only depends on the amount of mass present at that site. By Proposition~\ref{prop:well-posedness of Psi} we only need to consider divergence-free flows $\div \bar q=0$; this corresponds to $\bar q_{1,2}=\bar q_{2,1}$ both of which we shall denote as $\bar q$. For simplicity (and with a slight abuse of notation) we consider an homogeneous model, i.e. we assume that $k_{1,2}(\rho)=k(\rho_1)$ and $k_{2,1}(\rho)=k(m-\rho_1)$ for some function $k:\RR_+\to\RR_+$. 

We first study the implication of the range condition from Corollary~\ref{cor:convex Ran} for the two-state example. The range~\eqref{eq:Ran} can now be parametrised by
\begin{equation*}
  \Ran_m(k):= \left\{\Big(\alpha, k(m-k^{-1}(\alpha))\Big) :\alpha\in\lbrack 0,k(m)\rbrack\right\}.
\end{equation*}
Clearly for a linear $k$ the function $\alpha\to k(m-k^{-1}(\alpha))$ is linear and hence its graph is a convex set, see Figure~\ref{fig:phase space}(A), which rules out dynamical phase transitions by Corollary~\ref{cor:convex Ran}. This special case corresponds to independent particles jumping on the network. If $k$ is a convex  function, then $\alpha\to k(m-k^{-1}(\alpha))$ is a convex function, in which case its graph $\Ran_m(k)$ is not a convex set, see Figure~\ref{fig:phase space}(B). However, for a small $\overline q$ the gradient of $K\mapsto\sum_{(x,y)\in E}s(\overline q_{x,y}\mid K_{x,y})$ points left/downwards, and so the minimiser $K\in\co(\Ran_m(k))$ in $\Psi^-(\overline q)$ will still lie on $\Ran(k)$, implying that $\Psi^-(\overline q)=\Psi^+(\overline q)$. So for a convex function $k$ dynamical phase transitions can only occur for flows $\overline q$ that lie above the set $\Ran_m(k)$, see Figure~\ref{fig:phase space}(B). Similarly, for a concave function $k$, dynamical phase transitions are only possible for flows that lie below the set $\Ran_m(k)$, see Figure~\ref{fig:phase space}(C).

\begin{figure}[h!]
  \centering
  \subfloat[$k$ linear]{%
    \begin{tikzpicture}[scale=0.75]
      \tikzstyle{every node}=[font=\scriptsize];
      \draw[->](0,0)--(0,4) node[anchor=east]{$k_{2,1}$};
      \draw[->](0,0)--(4,0) node[anchor=north]{$k_{1,2}$};
      \draw(0,3) node[anchor=east]{$k(m)$} --node[above,pos=0.2,sloped]{$\Ran_m(k)$} (3,0) node[anchor=north]{$k(m)$};
      \draw[->](0,0)--(3,3) node[anchor=south west]{$\overline q$};
    \end{tikzpicture}
  }
  \subfloat[$k$ convex]{%
    \begin{tikzpicture}[scale=0.75]
      \tikzstyle{every node}=[font=\scriptsize];
      \draw[->](0,0)--(0,4) node[anchor=east]{$k_{2,1}$};
      \draw[->](0,0)--(4,0) node[anchor=north]{$k_{1,2}$};
      \filldraw[lightgray](0,3) .. controls (0,1.7) and (1.7,0) .. (3,0)--(0,3);
      \draw(0,3) node[anchor=east]{$k(m)$} .. controls (0,1.7) and (1.7,0) .. (3,0) node[below=-1,pos=0.3,sloped]{$\Ran_m(k)$} node[anchor=north]{$k(m)$};
      \draw[dotted](0,3)--(3,0);
      \draw(0,0)--(1.02,1.02);      
      \draw[double](1.02,1.02)--(2.97,2.97);\draw[->](2.99,2.99)--(3,3) node[anchor=south west]{$\overline q$};
    \end{tikzpicture}
  }
  \subfloat[$k$ concave]{%
    \begin{tikzpicture}[scale=0.75]
      \tikzstyle{every node}=[font=\scriptsize];
      \draw[->](0,0)--(0,4) node[anchor=east]{$k_{2,1}$};
      \draw[->](0,0)--(4,0) node[anchor=north]{$k_{1,2}$};

      \filldraw[lightgray](0,3) .. controls (1.3,3) and (3,1.3) .. (3,0)--(0,3);
      \draw(0,3) node[anchor=east]{$k(m)$} .. controls (1.3,3) and (3,1.3) .. (3,0) node[above,pos=0.3,sloped]{$\Ran_m(k)$} node[anchor=north]{$k(m)$};
      \draw[dotted](0,3)--(3,0);
      \draw[double](0,0) --(1.98,1.98);
      \draw[->](1.98,1.98)--(3,3) node[anchor=south west]{$\overline q$};
    \end{tikzpicture}
  }
  \caption{Three possible phase space plots. The grey areas depict the convex hull $\co(\Ran_m(k))$, and double stroked lines depict divergence-free flows $\overline q$ for which dynamical phase transition are not ruled out by Corollary~\ref{cor:convex Ran}.} 
  \label{fig:phase space}
\end{figure}
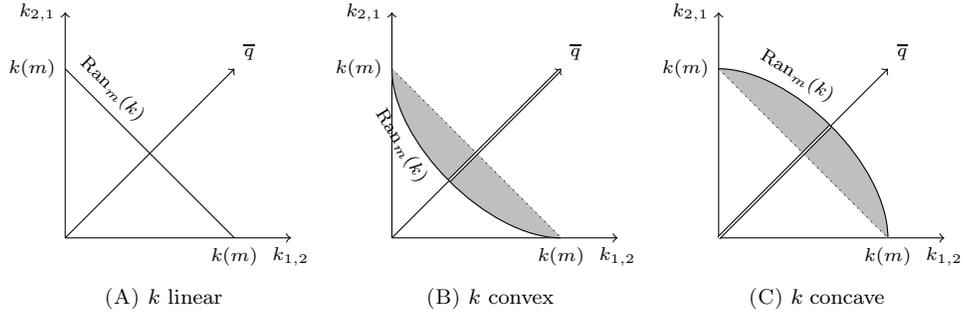

\begin{remark} From Figure~\ref{fig:phase space} one sees that by changing the total mass $m$ in the system the set $\Ran_m(k)$ can be moved upward or downward, which influences the region of possible dynamical phase transitions. The flexibility to play with $m$ will become more apparant and crucial in the next section where we construct an example of a dynamical phase transition. 
\end{remark}

We now study what the joint convexity condition of Proposition~\ref{prop:jointly convex} implies for the two-state homogeneous example. In that case joint convexity of the sum~\eqref{eq:sk} is implied by the joint convexity of the function $\RR^2\ni(\rho,q) \mapsto s(q\mid k(\rho))$, which corresponds to positive semidefiniteness of its Hessian matrix:
\begin{equation*}
	\begin{bmatrix}
	- q(\log k)''(\rho) + k''(\rho)  & -(\log k)'(\rho) \\
	-(\log k)'(\rho)            & 1/q
\end{bmatrix}.
\end{equation*}
By Sylvester's criterion, this matrix is positive semidefinite if and only if all its principal minors are non-negative, i.e.:
$$
\left\{
\begin{array}{l}
- q (\log k)''(\rho) + k''(\rho)\geq 0,\\ 
1/q \geq 0 ,\\ 
- (\log k)''(\rho) + k''(\rho)/q - \left((\log k)'(\rho)\right)^2 \geq 0\,.
\end{array}
\right.$$
The second inequality is trivial since $q\geq 0$, and the first inequality is contained in the third. Proposition~\ref{prop:jointly convex} thus means that dynamical phase transitions do not occur when the third inequality holds, that is:
\begin{equation}
	k''(\rho)\left(\frac1q- \frac1{k(\rho)}\right) \geq 0.
\label{eq:jointly convex condition}
\end{equation}
Actually this inequality holds for \emph{all} $\rho,q$ if and only if $k''(\rho)\geq0$ and $-k''(\rho)/k(\rho)\geq0$, which can only be true for linear functions $k$. In this sense Proposition~\ref{prop:jointly convex} is just ruling out the independent particles case. However, the same argument can be used to rule out phase transitions obtained by small oscillatory perturbations within a small neighbourhood of a critical time-independent $(\rho,\overline{q})$ (see Remark \ref{rem:second variation}). With this interpretation, we can deduce from~\eqref{eq:jointly convex condition} that for a convex $k$ dynamical phase transitions (with small oscillations) can only occur for large flows $\overline q$, and the other way around for concave $k$. Indeed, this statement is consistent with the phase space diagrams from Figure~\ref{fig:phase space}.

An exact characterisation of when a dynamical phase transition indeed occurs is computationally not easy, even in the simple two-state case. Instead of using a direct computational approach we will show the existence of a dynamical phase transition using a limiting procedure and a comparison with a continuous framework where the computations are easier.

\section{Dynamical phase transition II: an example}
\label{sec:discrete ring}

In the previous section we mostly focussed on sufficient conditions under which a dynamical phase transition can not occur; in this section we present an example where it does. The dynamics corresponds to a totally asymmetric zero range dynamics on a discrete ring with $L$ sites. The dynamics is spatially homogeneous and with strictly increasing jump rates
\begin{equation}
  k_{x,y}(\rho):=L\kappa(\rho_x)\mathds1_{x+1}(y),
\label{eq:k on ring}
\end{equation}
that we discuss in more detail in Section~\ref{subsec:discrete ring}, see also Figure~\ref{fig:directed ring}. Note that the rates $k$ on the left hand side of \eqref{eq:k on ring} depend on the size $L$ even if we are not writing explicitely such a dependence. We will fix the function $\kappa$ on the right hand side of \eqref{eq:k on ring} in such a way that  \eqref{eq:sk} is strictly concave on a region and such that there is just one single globally attractive stationary solution to equation \eqref{eq:closedrho}, which is natural in light of Remark~\ref{rem:single ss}.
We refer to \cite{MR3619540, MR3862457}
for discussions of dynamical phase transitions for asymmetric dynamics of particle systems in the hydrodynamic rescaling. We stress again that the limit considered here is very different from the hydrodynamic one.

More precisely we have the following result that allows to deduce the existence of a dynamical phase transition.

\begin{theorem} Consider a family of models on rings with $L$ sites and defined by the rates \eqref{eq:k on ring}. Assume that there exist $\overline\Phi,\tau,\delta>0$ and $\mu\geq0$ such that the function
\begin{equation}
  \phi \mapsto s\big(\phi \mid \kappa(\mu+\tau\phi)\big)
\label{eq:concavity}
\end{equation}
is strictly concave on $\lbrack \overline\Phi-\delta,\overline\Phi+\delta\rbrack$. If for average flow $\overline{q}_{x,x+1}\equiv \overline\Phi$ and any large $L>0$, the minimisation problem in the definition~\eqref{eq:defft} of $\Psi^+(\overline{q})$ has a unique minimiser, then for $L$ sufficiently large and $m:=L(\mu+\tau\overline\Phi)$ the system with rates $L\kappa$ undergoes a dynamical phase transition in $\overline{q}$.
\label{th:DPT}
\end{theorem}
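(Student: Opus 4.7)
The strategy is to exhibit a time-periodic admissible trial path with cost strictly below $\Psi^+(\overline q)$. By spatial homogeneity of the ring and the uniqueness hypothesis, the minimiser in $\Psi^+(\overline q)=\inf_\rho\sum_x s(\overline\Phi\mid L\kappa(\rho_x))$ under $\sum_x\rho_x=m$ is the uniform profile $\rho^\star_x\equiv m/L=\mu+\tau\overline\Phi$, whence $\Psi^+(\overline q)=L\cdot s(\overline\Phi\mid L\kappa(\mu+\tau\overline\Phi))$. The candidate for beating it is a rotating wave on the ring: fix a smooth, non-constant, $1$-periodic function $\phi:\RR\to[\overline\Phi-\delta,\overline\Phi+\delta]$ with $\int_0^1\phi=\overline\Phi$, and for $\Theta_{x,t}:=(x-t/\tau)/L \bmod 1$ define $q_{x,x+1}(t):=\phi(\Theta_{x,t})$ and $\rho_x(t):=\mu+\tau\phi(\Theta_{x,t})$. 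The velocity $1/\tau$ is dictated by the algebraic relation $\rho=\mu+\tau q$ that appears in the hypothesis; with it, the discrete continuity equation holds up to an error of order $1/L^2$ per edge, the total mass is exactly $m=L(\mu+\tau\overline\Phi)$, and the time-averaged flow on each edge over the period $T_L:=L\tau$ is $\overline\Phi$. A small $O(1/L)$ time-periodic perturbation promotes the ansatz to an exact element of $\A_{T_L,m}(\overline q)$, contributing at most $o(L)$ to the cost.

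By the change of variables $(x,t)\mapsto\Theta$, the time-averaged cost of the trial becomes
\[
L\int_0^1 F_L(\phi(\Theta))\,d\Theta + o(L),\qquad F_L(u):=s(u\mid L\kappa(\mu+\tau u)).
\]
Writing $F_L(u)=F_1(u)-u\log L+(L-1)\kappa(\mu+\tau u)$ with $F_1(u):=s(u\mid\kappa(\mu+\tau u))$, and using $\int\phi=\overline\Phi$ to cancel the $\log L$ contribution, the desired inequality $\int F_L(\phi)<F_L(\overline\Phi)$ splits as
\[
\Bigl[\,{\textstyle\int F_1(\phi)\,d\Theta - F_1(\overline\Phi)}\,\Bigr] + (L-1)\Bigl[\,{\textstyle\int\kappa(\mu+\tau\phi)\,d\Theta - \kappa(\mu+\tau\overline\Phi)}\,\Bigr] < 0.
\]
The first bracket is strictly negative by Jensen's inequality and the strict concavity hypothesis on $F_1$ over $[\overline\Phi-\delta,\overline\Phi+\delta]$, applied to the non-constant $\phi$. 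The second bracket is controlled by the observation that, inspecting the explicit form of $F_1''$, strict concavity of $F_1$ on the interval forces $\kappa$ itself to be concave on $[\mu+\tau(\overline\Phi-\delta),\mu+\tau(\overline\Phi+\delta)]$ (at least in the natural regime $\overline\Phi<\kappa(\mu+\tau\overline\Phi)$ where the "finite-rate" flow is smaller than the underlying intensity), so a second application of Jensen renders this bracket non-positive as well. Therefore $\Psi_m(\overline q)<\Psi^+(\overline q)$, which is precisely a dynamical phase transition in the sense of Definition~\ref{def:DFT}.

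The main technical obstacle is the $(L-1)$-amplified second bracket: even a mildly convex $\kappa$ on the target interval would flip the sign of that term and dominate the first for large $L$, so the argument genuinely requires the concavity of $\kappa$ there to be inherited from (or supplemented to) the hypothesised concavity of $F_1$. A secondary difficulty lies in the construction of Step~2: enforcing \emph{exact} admissibility of the rotating-wave ansatz (continuity equation plus prescribed edge-wise time averages) with an error that is truly $o(L)$ in the cost requires a concrete discrete Fourier computation or a boundary-layer correction, and is the only place where the discreteness of the ring genuinely enters the otherwise essentially continuum estimate.
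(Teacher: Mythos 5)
There is a genuine gap, and it sits exactly where you located your ``main technical obstacle'': the $(L-1)$-amplified bracket cannot be salvaged, because your trial flow has the wrong scaling. The rates of the model are $L\kappa(\rho_x)$, so the typical and the forced flows are of order $L$; the paper's proof (and its equation~\eqref{edue}) compares paths with edge flows $q^L_{x,x+1}=L\hat q$, i.e.\ $O(L)$, against $\Psi^+_{L(\mu+\tau\overline\Phi)}(L\overline\Phi)$. With that scaling the positive homogeneity $s(La\mid Lb)=L\,s(a\mid b)$ makes the whole cost factor as $L^2\int_0^1 s(\Phi(u)\mid\kappa(\mu+\tau\Phi(u)))\,du$ to leading order, and a \emph{single} application of Jensen with the hypothesised strict concavity of $\phi\mapsto s(\phi\mid\kappa(\mu+\tau\phi))$ finishes the argument --- no separate control of $\int\kappa(\mu+\tau\phi)$ is needed. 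Your $O(1)$ flow destroys this homogeneity: the term $(L-1)\kappa(\mu+\tau u)$ survives, and your proposed rescue --- that strict concavity of $F_1$ forces $\kappa$ to be concave --- is false in the regime that actually matters. Writing $F_1''(u)=\bigl(u^{-1/2}-u^{1/2}\tau\kappa'/\kappa\bigr)^2+\tau^2\kappa''\,(1-u/\kappa)$ shows that concavity of $F_1$ requires $\kappa''<0$ only when $u<\kappa(\mu+\tau u)$; in the paper's own example ($\kappa(\rho)=(\rho+1)\log(\rho+1)-\rho$, $\mu=0$, $\tau=1$, small $\overline\Phi$) one has $\kappa(\phi)\approx\phi^2/2<\phi$ and $\kappa$ is \emph{strictly convex}, so your second bracket is strictly positive and, multiplied by $L-1$, overwhelms the first for large $L$. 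Worse, demanding $\kappa$ concave is essentially incompatible with the uniqueness-of-minimiser hypothesis: as the paper notes in Section~\ref{subsec:examples}, uniqueness hinges on convexity of $\rho\mapsto -\overline q\log\kappa(\rho)+L\kappa(\rho)$, which for large $L$ requires $\kappa$ convex. So your argument proves the theorem only on a set of hypotheses that is (at best) empty in the cases of interest.

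The fix is to rescale the travelling wave: take $q_{x,x+1}(t)=L\Phi\bigl(\tfrac{x+1/2}{L}-\tfrac{t}{\tau}\bigr)$ and $\rho_x(t)=L\int_{(x-1/2)/L}^{(x+1/2)/L}\bigl(\mu+\tau\Phi(y-t/\tau)\bigr)dy$, i.e.\ wave period $\tau$ rather than $L\tau$ and amplitude $O(L)$. This choice also dissolves your ``secondary difficulty'': defining the density by cell averages and the flow by point evaluation at cell boundaries makes the discrete continuity equation~\eqref{eq:discr cont eq} hold \emph{exactly} whenever the continuous one~\eqref{contc} does, so no $O(1/L)$ corrector or Fourier estimate is needed --- the discretised pair is an exact element of $\A_{T,L(\mu+\tau\overline\Phi)}(L\overline\Phi)$. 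The only genuinely discrete remainder is then the replacement of $\Phi(u)$ by $L\int_{u-1/L}^{u}\Phi$ inside $\kappa$, which converges uniformly and preserves the strict Jensen gap for $L$ large. (The statement of Theorem~\ref{th:DPT}, which writes $\overline q_{x,x+1}\equiv\overline\Phi$, is admittedly ambiguous on this normalisation; but the proof only goes through for the flow $L\overline\Phi$ matched to the rates $L\kappa$, as \eqref{edue} makes explicit.)
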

The dynamical phase transition will be constructed in two stages. In Section~\ref{cont-comp} we perform an auxiliary computation on a continuous ring. We introduce a cost functional that is a continuous version of the discrete one and show that for this functional a travelling wave is less costly than a constant flow. In Section~\ref{subsec:approximation} we discretize the continuous travelling wave to show that the same result holds on the discrete ring, for $L$ sufficiently large but finite. This will conclude the proof of the result. In Section~\ref{subsec:examples} we present an example of a rate $k$ and flow $\overline{q}$ that satisfy the conditions in Theorem \ref{th:DPT} so that a dynamical phase transition takes place.

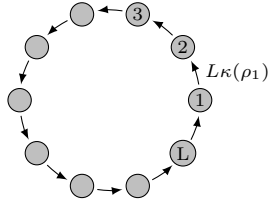
\begin{figure}[h!]  
\centering
\begin{tikzpicture}[scale=0.4]
\tikzstyle{every node}=[font=\scriptsize]
\def \n {10}
\def \m {9}
\def \radius {3}
\def \margin {10} 

  \node[draw, fill=lightgray,circle,inner sep=1] at ({360/\n * (1 - 1)}:\radius){1};
  \draw[->, >=latex] ({360/\n * (1 - 1)+\margin}:\radius) 
    arc ({360/\n * (1 - 1)+\margin}:{360/\n * (1)-\margin}:\radius) node[midway,anchor=west]{$L\kappa(\rho_1)$};
\foreach \s in {2,...,3}
{
  \node[draw, fill=lightgray,circle,inner sep=1] at ({360/\n * (\s - 1)}:\radius){\s};
  \draw[->, >=latex] ({360/\n * (\s - 1)+\margin}:\radius) 
    arc ({360/\n * (\s - 1)+\margin}:{360/\n * (\s)-\margin}:\radius);
}
\foreach \s in {4,...,\m}
{
  \node[draw, fill=lightgray,circle,inner sep=1] at ({360/\n * (\s - 1)}:\radius){\phantom{1}};
  \draw[->, >=latex] ({360/\n * (\s - 1)+\margin}:\radius) 
    arc ({360/\n * (\s - 1)+\margin}:{360/\n * (\s)-\margin}:\radius);
}
  \node[draw, fill=lightgray,circle,inner sep=1] at ({360/\n * (\n - 1)}:\radius){L};
  \draw[->, >=latex] ({360/\n * (\n - 1)+\margin}:\radius) 
    arc ({360/\n * (\n - 1)+\margin}:{360/\n * (\n)-\margin}:\radius);
\end{tikzpicture}
\caption{A directed discrete ring}
\label{fig:directed ring}
\end{figure}

\subsection{Totally asymmetric spatially homogeneous zero-range processes on a discrete ring.} 
\label{subsec:discrete ring}

The graph now consists of a discrete ring $\X:=\{1,2,\dots ,L\}$ with directed nearest-neighbour edges $E=\{(x,x+1)\,, x=1,\dots , L\}$ where the sum is modulo $L$, see Figure~\ref{fig:directed ring}. This means that mass can flow through the directed ring in one direction only, and that the divergence-free condition from Proposition~\ref{prop:well-posedness of Psi} imposes that $\overline q$ is a constant.

We assume that the jump rates are zero-range and spatially homogeneous, i.e. \eqref{eq:k on ring} for some smooth strictly increasing function $\kappa: \mathbb R_+\to \mathbb R_+$ satisfying the assumptions mentioned in Section~\ref{subsec:particle system}. Indeed, this setting is similar to the two-state model from Section~\ref{subsec:2d example}, where the rates are now rescaled with a factor  $L$.

The limiting system of equations~\eqref{eq:closedrho} associated to this class is given by
\begin{equation}\label{idring}
  \dot\rho_x(t)=L\kappa(\rho_{x-1}(t))-L\kappa(\rho_x(t))\,, \qquad x\in \X\,.
\end{equation}
with corresponding large-deviation rate functional for the paths,
\begin{equation*}
  \I_{\lbrack0,T\rbrack}(\rho,q)=\sum_{x=1}^L \int_0^T\!s\big(q_{x,x+1}(t)\mid L\kappa(\rho_x(t))\big)\,dt.
\end{equation*}
Since $\kappa$ is strictly increasing, for any fixed $m$ there exists a unique stationary solution $\overline\rho\in \M_m(\X)$ of \eqref{idring}, given by $\overline\rho_x=\frac mL$. Similarly, if the minimiser in the Definition~\eqref{eq:defft} of $\Psi_m^+(\overline{q})$ is unique, then by symmetry,
\begin{equation}\label{psipiu}
  \Psi^+_m(\overline q)= L s\big(\overline q \mid L\kappa(m/L)\big)\,.
\end{equation}

As mentioned in Remark~\ref{rem:single ss}, for systems with multiple equilibria and unstable behaviours one can easily imagine the occurrence of dynamical phase transitions, especially if the limit equation~\eqref{idring} itself already has a time-dependent solution. Therefore it is important that the equilibrium $\overline\rho$ is unique and globally attractive, so that the typical behaviour of the system is to relax to this equilibrium. This follows automatically from the fact that the equation is dissipative, which we will now derive from the monotonicity of $\kappa$.

The invariant measure of the zero range model is known to be:
  $$\nu^N\left(\rho^N\right)=\prod_{x=1}^L\frac{1}{Z\widehat{L \kappa}(\rho^N_x)}\,,
  $$ 
when $\rho^N\in\M_m(\X)\cap (\tfrac{m}{N}\NN)^L$. Here $\widehat{L\kappa}\left(\frac {m i}{N}\right):=\prod_{j=1}^i L\kappa\left(\frac{m j}{N}\right)$ for $i\geq 1$ and $\widehat{L\kappa} (0):=1$, and $Z$ is a normalisation factor, see~\cite[Sec.~2.3]{Kipnis1999}. A direct computation shows that when distributed according to $\nu_N$ and conditioned on the total mass $\sum_x\rho^N_x=m$, then $\rho^N$ satisfies a large-deviation principle with speed $N$ in $\M_m(\X)$ with rate functional:
\begin{equation}\label{clacdir}
  V_m(\rho)=\frac 1m \sum_{x=1}^L\int_{\overline\rho}^{\rho_x}\!\log \kappa(\alpha)\,d\alpha\,,
\end{equation}
where the equilibrium $\overline\rho$ now serves as a normalisation constant such that $V_m(\overline\rho)=0$. Note that due to monotonicity of $\kappa$, the function $V_m$ is convex and has unique minimiser $\overline\rho$. By macroscopic fluctuation techniques~\cite[Prop.~5.3]{RengerZimmer2019TR} it can be proved that the functional $V_m$ decreases along solutions $\rho(t)$ of~\eqref{idring}, which can also be seen from the direct computation:
\begin{multline*}
  \frac{d V_m(\rho(t))}{dt}=\sum_{x\in L} \frac Lm\Big[\kappa(\rho_{x-i}(t))-\kappa(\rho_x(t))\Big]\log \kappa(\rho_x(t))\\=-\frac Lm\sum_{x\in L} s\Big(\kappa(\rho_x(t))\Big|\kappa(\rho_{x+1}(t)\Big)\leq 0\,.
\end{multline*}
This shows that $V_m$ is indeed a Lyapunov functional, from which we deduce the asymptotic stability, that is, the limit equation~\eqref{idring} is dissipative and so $\overline{\rho}$ is globally attractive.

\begin{remark}
For models for which the invariant measure cannot be computed explicitly, \eqref{clacdir} and similar expressions can often still be derived from a Hamilton-Jacobi equation along the lines of a discrete version of Macroscopic Fluctuation Theory.
\end{remark}

\subsection{An inequality on the continuous torus}
\label{cont-comp}

Here we introduce an auxiliary functional on smooth density and flows that is related to the rate functional for a large ring under suitable conditions. 
It may correspond heuristically to a model on a \emph{continuous} ring where mass can flow anticlockwise only. Although it is not obvious whether this framework actually corresponds to a large-deviation principle (flows may develop shocks), the computation will be useful to prove and to understand the origin of the dynamical phase transition. Indeed the scaling limit of the discrete functionals is an interesting problem. Quantities on the continuous ring will be denoted by a hat (\;$\hat{\,}$\;).

On the continuous torus, $[0,1]$ with periodic boundary conditions, we consider smooth non-negative flows and densities $\hat q(x,t),\hat\rho(x,t)\geq 0$ such that
the continuous version of the continuity equation is satisfied:
\begin{equation}\label{contc}
  \partial_t\hat \rho(x,t)+\partial_x \hat q(x,t)=0\,, \qquad x\in [0,1]\,.
\end{equation}
For any such pair the corresponding continuous version of the rate functional is given by
\begin{equation}\label{func}
  \hat\I_{[0,T]}(\hat q,\hat \rho):=\int_0^T\!\int_0^1\! s\big(\hat q(x,t) \mid \kappa(\hat\rho(x,t)\big)\,dx\,dt.
\end{equation}
Note that in this expression appears $\kappa$ and this functional will be approximated by discrete models with a large number $L$ of sites and having rates $L\kappa$.  

We first construct a specific travelling wave solution of equation~\eqref{contc}, under the assumption of Theorem~\ref{th:DPT} with given $\overline\Phi,\tau,\delta>0$ and $\mu\geq0$. Let $\Phi:[0,1]\to\lbrack\overline\Phi-\delta,\overline\Phi+\delta\rbrack$ be a smooth and periodic non-constant function with average $\int_0^1\!\Phi(x)\,dx=\overline\Phi$, and define the travelling wave ($\TW$) as:
\begin{align}
  \hat \rho^\TW(x,t):=\mu + \tau\Phi(x-t/\tau).
    &&\text{and}&&
  \hat q^\TW(x,t):=\Phi(x-t/\tau)
\label{eq:cont travelling wave}
\end{align}
Then clearly, this pair $(\hat q^\TW,\hat \rho^\TW)$ solves \eqref{contc}, it is $\tau$-periodic in $t$ and $1$-periodic in $x$. Without loss of generality we may assume that $T$ is a multiple of $\tau$, so that the average flow is
$$
  \frac 1T \int_0^T\!\hat q^\TW(x,t)\,dt=\int_0^1 \Phi(x) dx:=\overline \Phi,
$$
and the corresponding total mass is given by 
$$
  m:=\int_0^1\!\hat\rho^\TW(x,t)\,dx=\mu+\tau\overline\Phi\,.
$$
We may therefore write $(\hat\rho^\TW,\hat q^\TW)\in\M_m([0,1])\times\M([0,1])$. The functional~\eqref{func} for this special pair is
$$
  \hat\I_{[0,T]}(\hat\rho^\TW,\hat q^\TW)=\int_0^T\!\int_0^1\!s\big(\Phi(x-t/\tau) \mid \kappa( \mu + \tau\Phi(x-t/\tau))\big)\,dx\,dt.
$$
Since the function $\Phi$ is periodic of period one, exchanging the order of integrations, the integrand of the integral on $dx$ does not depend on $x$ and the integral can be removed computing the integrand on $0$. With the change of variable $u=-t/\tau$ and again assuming that $T$ is a multiple of $\tau$,
\begin{equation*}
  \frac1T\hat\I_{[0,T]}(\hat \rho^\TW,\hat q^\TW)=\int_0^1\!s\big(\Phi(u) \mid  \kappa( \mu + \tau\Phi(u))\big)\,du.
\end{equation*} 

Next, let us compare the travelling wave to its corresponding constant profile
\begin{align}
  \hat \rho^\const(x,t):=\mu + \tau\overline\Phi.
    &&\text{and}&&
  \hat q^\const(x,t):=\overline\Phi
\label{eq:cont const}
\end{align}
Then clearly $\int_0^1\!\hat\rho^\const(x,t)\,dx=m=\int_0^1\!\hat\rho^\TW(x,t)\,dx$ and $\tfrac1T\int_0^T\!\hat q^\const(x,t)\,dt=\overline\Phi=\tfrac1T\int_0^1\!\hat q^\TW(x,t)\,dt$. By the strict concavity of~\eqref{eq:concavity} and the fact that $\Phi$ is not constant, Jensen's inequality yields that
\begin{multline}
  \frac1T\hat\I_{[0,T]}(\hat \rho^\TW,\hat q^\TW) =\int_0^1\!s\big(\Phi(u) \mid  \kappa( \mu + \tau\Phi(u))\big)\,du \\
    < \!s\big(\overline \Phi \mid  \kappa( \mu + \tau\overline\Phi)\big) = \frac1T\hat\I_{[0,T]}(\hat\rho^\const,\hat q^\const),
\label{eq:cont DPT}
\end{multline}
which can be heuristically interpreted as a dynamical phase transition on the continuous torus. We will use this strict inequality to prove the existence of a dynamical phase transition for a finite ring.

\subsection{Discretisation of density-flow pairs} 
\label{subsec:approximation}

The construction of the dynamical phase transition on the finite ring is obtained by discretising any smooth density and flow pair $(\hat\rho,\hat q)\in\M_m([0,1])\times\M([0,1])$ as follows. We embed the ring on the continuous torus associating the vertex $x\in \left\{1,\dots ,L\right\}$ to the point $x/L\in [0.1]$. We consider discretized masses and flows
\begin{align*}
  \rho_x^L(t):=L\int_{\frac{x-1/2}L}^{\frac{x+1/2}L}\hat\rho(y,t)\,dy
    &&\text{and}&&
  q_{x,x+1}^L(t):= L\hat q\big(\tfrac{x+1/2}L,t\big).
\end{align*}
It is easy to see that if $(\hat\rho,\hat q)$ satisfies the continuous continuity equation~\eqref{contc} then $(\rho_x^L,q_x^L)$ satisfies the discrete continuity equation~\eqref{eq:discr cont eq}.



We apply this discretisation to the two different profiles from the previous section. For the travelling wave $(\hat\rho^\TW,\hat q^\TW)$ from~\eqref{eq:cont travelling wave} with discretisation $(\rho^{L,\TW},q^{L,\TW})$. Recalling that the rates are $L\kappa$, we compute,
\begin{align}
  \frac1T\I_{[0,T]}(\rho^{L,\TW}, q^{L,\TW}) &=
\sum_{x=1}^L\frac LT\int_0^T\! s\!\left( \textstyle \hat q^\TW\big(\tfrac{x+1/2}L,t) \,\Big\vert\, \kappa\Big( L \int_{\frac{x-1/2}L}^{\frac{x+1/2}L}\!\hat\rho^\TW(y,t)\,dy\Big)\right)dt \notag\\
&=
L^2 \int_0^1\! s\!\left(\textstyle\Phi(u) \,\Big\vert\, \kappa\Big(\mu + \tau L \int_{u-1/L}^u\!\Phi(z)\,dz\Big)\right)\,du.
\label{euna}
\end{align}
For the constant profile $(\hat\rho^\const,\hat q^\const)$ from~\eqref{eq:cont const} with discretisation~$(\rho^{L,\const},q^{L,\const})$,
\begin{equation}
  \frac1T\I_{[0,T]}(\rho^{L,\const}, q^{L,\const}) = L^2 s(\overline\Phi \mid \kappa\big(\mu+\tau\overline\Phi)\big) \stackrel{\eqref{psipiu}}{=} \Psi^+_{L(\mu+\tau\overline\Phi)}(L\overline\Phi).
\label{edue}
\end{equation}
Observe that both discretised paths $(\rho^{L,\TW},q^{L,\TW}),(\rho^{L,\const},q^{L,\const})$ are feasible in the sense that they lie in the set $\A_{T,L(\mu+\tau\overline\Phi)}(L\overline\Phi)$ from~\eqref{eq:defA}.

Without loss of generality we may assume that $\delta$ is sufficiently small so that $\Phi\geq\overline\Phi-\delta>0$ and so under the assumptions on $\kappa$, the function $\kappa\big(\mu + \tau L \int_{u-1/L}^u\!\Phi(z)\,dz\big)$ is bounded from above and from below away from zero. Therefore from formulas~\eqref{euna} and \eqref{edue}
\begin{align*} 
  &\lim_{L\to +\infty}\frac{1}{L^2T}\I_{[0,T]}(\rho^{L,\TW}, q^{L,\TW})=\int_0^1\!s\big(\Phi(u)\mid \kappa(\mu +\tau\Phi(u))\big)\,du, \quad\text{and}\\
  &\lim_{L\to +\infty}\frac{1}{L^2T}\I_{[0,T]}(\rho^{L,\const}, q^{L,\const})= s(\overline \Phi \mid \kappa(\mu+\tau\overline\Phi))\,.
\end{align*}
From the strict inequality \eqref{eq:cont DPT} we thus find that for sufficiently large but finite $L$, \eqref{euna} is strictly less than \eqref{edue}. This proves the claim of Theorem~\ref{th:DPT} that for this $L$ and $m:=L(\mu+\tau\overline\Phi)$ the system undergoes a dynamical phase transition for the average flow $\overline\Phi$.

\subsection{An example}
\label{subsec:examples}

For this example we take the classical Young function, see Figure~\ref{fig:Hplot}(A),
\begin{align*}
  \kappa(\rho)&:=(\rho+1)\log(\rho+1)-\rho,
\end{align*}
and $\mu=0$ and $\tau=1$.

Given $\overline q$, the second condition in Theorem~\ref{th:DPT} is easily checked by rewriting,
\begin{align}
  L s\big(\overline q\mid L\kappa(\rho)\big) = L \phi\log\mfrac{\overline q}{L} -L\overline q - L\overline q\log \kappa(\rho) + L^2 \kappa(\rho).
\label{eq:s expand}
\end{align}
Since $\kappa$ is strictly convex and strictly log-concave, this function is clearly strictly convex, which shows the uniqueness of the minimiser in the Definition of $\Psi^+(\overline q)$. Note that this strict convexity is related but a bit different from the global attractiveness of the limit equation that we derived in Subsection~\ref{subsec:discrete ring}, since in this case we force a possible non-typical flow $\overline{q}$ on the system.

To show the strict concavity of~\eqref{eq:concavity} we calculate the second derivative:
\begin{align}
  H(\phi)&:=\mfrac{d^2}{d\phi^2} s\big(\phi\mid \kappa(\phi)\big)\notag\\ 
    &= \frac{1}{\phi} + \frac{1}{\phi+1} - 2\frac{\log(\phi+1)}{(\phi+1)\log(\phi+1)-\phi}\notag\\
    &\qquad - \frac{\phi}{(\phi+1)\big((\phi+1)\log(\phi+1)-\phi\big)} + \frac{\phi(\log(\phi+1))^2}{\big((\phi+1)\log(\phi+1)-\phi\big)^2}
\label{eq:H}
\end{align}
It is easily computed that $H(1)<0$, and in fact $\lim_{\phi\downarrow0}H(\phi)=-\infty$, so we may choose any $\overline\Phi\leq 1$ and $\delta>0$ sufficiently small, see Figure~\ref{fig:Hplot}. It  follows from Theorem~\ref{th:DPT} that for any $\overline{q}\equiv\overline{\Phi}\leq 1$, there exists a sufficiently large $L$ so that a dynamical phase transition occurs
with mass $m=L\overline \Phi$ and rates $L\kappa$.

\begin{figure}[h!]
\captionsetup{margin=2pt}
\centering
  \subfloat[The Young function]{%
    \begin{tikzpicture}[scale=0.75]
      \tikzstyle{every node}=[font=\scriptsize];
      \draw[->](0,0)--(4,0) node[anchor=south]{$\rho$};
      \draw[->](0,0)--(0,4);
      \draw(0,0) .. controls (0.5,0) and (2,0.5)..(4,4) node[anchor=east]{$\kappa(\rho)$};
    \end{tikzpicture}
    \label{subfig:Young}
  }\qquad\qquad\qquad
  \subfloat[$H(\phi)$]{%
    \begin{tikzpicture}[scale=0.75]
      \tikzstyle{every node}=[font=\scriptsize];
      \draw[->](0,0) --(4,0) node[anchor=north]{$\phi$};
      \draw[->](0,0.2)--(0,-3.8);
      \draw(1,0.-0.07)--(1,0.07) node[anchor=south]{$1$};
      \draw(0.2,-3.8) .. controls (0.3,0.2) and (0.5,0.2).. (4,0.1);
      \draw[->](2,-1.8)--(1.2,-0.3);
      \draw[->](2,-1.9) node[anchor=west,align=left]{Dynamical\\ phase\\ transition}--(0.6,-1.3);
      \draw[->](2,-2)--(0.4,-3);
    \end{tikzpicture}
    \label{subfig:H}
  }
\caption{On the left, the Young function $\kappa(\rho)$, and on the right the second derivative of $s\big(\phi\mid \kappa(\phi)\big)$.}
\label{fig:Hplot}
\end{figure}
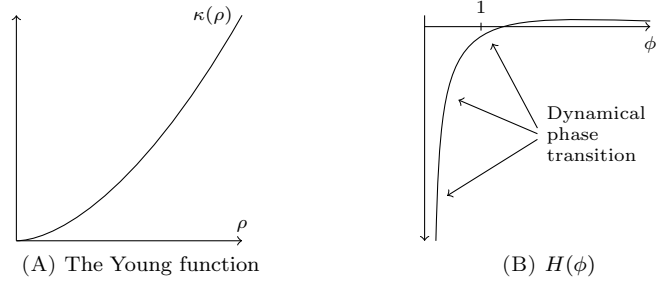


We mention that the choice of $\kappa$ for which a dynamical phase transition can occur is subtle. For example, a similar calculation as~\eqref{eq:H} can be done for the concave function 
$\kappa(\rho):= 2\log(\rho + 1)$, showing that $\lim_{\phi\downarrow0}H(\phi)=-1$. However, this function $\kappa(\rho)$ is strictly log-concave but not convex, and so \eqref{eq:s expand} ceases to be strictly convex for large $L$. Therefore, the minimiser in the definition of $\Psi_m^+(\overline{q})$ may no longer be unique, so possibly $\Psi_m^+(\overline{q})<Ls(\overline q\mid Lk(m/l))$, which breaks down the argument.

\section{Discussion}

We studied dynamical phase transitions on a finite graph, where the graph is fixed, and the number of particles and the time interval are simultaneously sent to infinity. For these models the action functional~\eqref{eq:LDP N} is entropic rather than quadratic. Therefore the dynamical phase transitions behave differently than previously studied models in the literature. We derived two different convexity conditions to rule out dynamical phase transitions: Corollary~\ref{cor:convex Ran} and Proposition~\ref{prop:jointly convex}. In particular the arguments that are worked out for the two-state example in Figure~\ref{fig:phase space} can be used to narrow down the search for possible dynamical phase transitions. As a byproduct of the bounds of Proposition~\ref{disturba} we also obtain a sufficient condition for a dynamical phase transition to occur, see Corollary~\ref{cor:Psiplus not convex}. Finally, we constructed an example of a dynamical phase transition for a specific zero-range process on a sufficiently large ring. It remains an open question whether dynamical phase transitions can occur on a graph with only two or three states. For smaller graphs, even in the case of two or three nodes, the exact structure of the rate function for the long-time averaged flow is more difficult to understand.

\appendix

\section{A Sanov Theorem for paths}
\label{app:ldp proof}

In this section we schetch  a simple argument for Theorem~\ref{th:discr path ldp} for the case of independent particles. A similar Sanov argument with interacting particles but without flows can be found in~\cite{Feng1994,Leonard1995}. The motivation of this Appendix is to give just the intuition on the specail form of the rate functional. An even more general proof can be found in~\cite{PattersonRenger2019} and \cite{Kraaij2017}.
\begin{proof}
Let $X_i(t), \, i=1,\hdots,N$ be independent copies of a Markov chain on the graph $(\X,E)$ with initial distribution $m^{-1}\rho^*$ and transition rates $k_{x,y}$. We consider the empirical measure of the random trajectories $X_{i}(\cdot):=\left(X_{i}(t)\right)_{t\in [0,T]}$ in $D([0,T];\X)$,
\begin{equation}\label{empp}
  \hat \PP^N:=\frac 1N \sum_{i=1}^N\delta_{ X_{i}(\cdot)}.
\end{equation}
By Sanov's Theorem, 
$\hat{\PP}^N$ satisfies a large-deviation principle in $\M_1(D([0,T];\X))$ with rate functional
\begin{equation}\label{re}
  \mathbb I_{[0,T]}\big(\hat{\PP}\big)=H\left(\hat{\PP} \mid \PP_{\rho^*}\right)\,,
\end{equation}
where $\PP_{\rho^*}$ is the law on $D([0,T];\X)$ of the Markov chain $X_1(\cdot)$ with transition rates $k_{x,y}$ and initial distribution
$m^{-1}\rho^*$, and $H(\cdot|\cdot)$ is the relative entropy. 

Let $\pi_t\lbrack x(\cdot)\rbrack := x(t)$ and $\Delta_t\lbrack x(\cdot)\rbrack:=(x(t^-),x(t))$ be the evaluation maps (assuming c\`adl\`ag paths). We can then write the pair $(\rho^N, W^N)$, defined in \eqref{eq:discr emp measure},\eqref{eq:discr emp flow}, as a function of $\hat{\PP}^N$:
\begin{equation}
\left\{
\begin{array}{l}
  \rho^N(t)=m\hat{\PP}^N\circ\pi_t^{-1}\,,\\
  Q^N(t)=m\hat{\PP}^N\circ\Delta_t^{-1}
\end{array}
\right.
\end{equation}
Hence by the contraction principle,
\begin{equation*}
  \I_{[0,T]}(\rho, q)=\inf_{\substack{\hat{\PP}\in\M_1(D([0,T];\X)):\\ \rho(\cdot)=m\hat{\PP}\circ\pi^{-1}_{(\cdot)},\, q(\cdot)=m\hat\PP\circ\Delta^{-1}_{(\cdot)} }} \mathbb{I}_{[0,T]}(\hat{\PP})\,,
\end{equation*}
It can be checked that the minimiser is a Markovian measure $\hat{\mathbb P}^{(\rho,q)}$ with time-dependent transition rates (assuming $\rho_x>0$),
\begin{equation}\label{correre}
  \hat k_{x,y}(t):=\frac{q_{x,y}(t)}{\rho_x(t)}\,,
\end{equation}
and initial condition $m^{-1}\rho^*$.

The relative entropy between two Markovian processes can be explicitly computed, and so
\begin{align*}
  \I_{[0,T]}(\rho, q)&=\mathbb{I}_{[0,T]}(\hat{\mathbb P}^{(\rho,q)})=H\!\left(\hat{\mathbb P}^{(\rho,q)}\Big |\mathbb P_{\rho^*}\right)\\
                     &=\sum_{(x,y)\in E}\int_0^T\!s\big(q_{x,y}(t) \mid \rho_x(t) k_{x,y} \big)dt\,.
\end{align*}
\end{proof}

\section*{acknowledgement}

This research has been partially funded by Deutsche Forschungsgemeinschaft (DFG) through grant CRC
1114 ``Scaling Cascades in Complex Systems'', Project Number 235221301, Project C08.

\bibliographystyle{plain}
\bibliography{library}

\end{document}